\numberwithin{equation}{section}
\newtheorem{theorem}{Theorem}
\numberwithin{theorem}{section}
\newtheorem{defprop}[theorem]{Definition-Proposition}
\newtheorem{proposition}[theorem]{Proposition}
\newtheorem{lemma}[theorem]{Lemma}
\newtheorem{corollary}[theorem]{Corollary}
\newtheorem{definition}[theorem]{Definition}
\newtheorem{conjecture}[theorem]{Conjecture}
\newtheorem{notation}[theorem]{Notation}
\newcommand{\C}{\mathbb{C}}
\newcommand{\Z}{\mathbb{Z}}
\DeclareMathOperator{\GF}{GF}
\DeclareMathOperator{\LM}{LM}
\DeclareMathOperator{\vol}{vol}
\DeclareMathOperator{\NF}{NF}
\DeclareMathOperator{\Span}{Span}
\DeclareMathOperator{\gp}{gp}
\DeclareMathOperator{\HP}{HP}
\DeclareMathOperator{\hilbert}{Hilb}
\DeclareMathOperator{\HS}{HS}
\DeclareMathOperator{\HF}{HF}
\newcommand{\sgp}{S}
\newcommand{\polytope}{\mathscr P}
\newcommand{\cone}{\mathscr C}
\DeclareMathOperator{\reg}{reg}
\DeclareMathOperator{\dwit}{d_{wit}}
\newcommand{\R}{\mathbb{R}}
\newcommand{\N}{\mathbb{N}}
\newcommand{\Q}{\mathbb{Q}}
\newcommand{\degreeI}{\delta}
\begin{document}
\title{Sparse Gr\"obner Bases: the Unmixed Case}
\author[1,2,3,4]{Jean-Charles Faug\`ere}
\author[5,6,7,8]{Pierre-Jean Spaenlehauer}
\author[2,1,3,4]{\authorcr Jules Svartz}

\affil[1] {INRIA Paris-Rocquencourt} 
\affil[2] {Sorbonne Universit\'es, UPMC Univ. Paris 06}
\affil[3] {CNRS, UMR 7606, LIP6}
\affil[4] {PolSys Project, Paris, France}
\affil[5] {INRIA Nancy Grand-Est}
\affil[6] {Universit\'e de Lorraine}
\affil[7] {CNRS,  UMR 7503, LORIA}
\affil[8] {Caramel Project, Nancy, France}

\date{Corrected version, \today}

\maketitle
\begin{abstract}
  Toric (or sparse) elimination theory is a framework developped
  during the last decades to exploit monomial structures in
  systems of Laurent polynomials. Roughly speaking, this amounts to computing in a \emph{semigroup algebra}, \emph{i.e.} an
  algebra generated by a subset of Laurent monomials.  In order to solve symbolically sparse systems, we introduce
  \emph{sparse Gr\"obner bases}, an analog of classical Gr\"obner
  bases for semigroup algebras, and we propose sparse variants of the
  $F_5$ and FGLM algorithms to compute them.  Our prototype
  ``proof-of-concept'' implementation shows large speed-ups
  (more than 100 for some examples) compared to optimized (classical)
  Gr\"obner bases software.  Moreover, in the case where the
  generating subset of monomials corresponds to the points with integer coordinates in a
  normal lattice polytope $\polytope\subset\R^n$ and under regularity
  assumptions, we prove complexity bounds which
  depend on the combinatorial properties of $\polytope$. These bounds
  yield new estimates on the complexity of solving $0$-dim systems
  where all polynomials share the same Newton polytope (\emph{unmixed case}).
  For instance, we generalize the bound $\min(n_1,n_2)+1$ on
  the maximal degree in a Gr\"obner basis of a $0$-dim. bilinear
  system with blocks of variables of sizes $(n_1,n_2)$ to the
  multilinear case: $\sum n_i - \max(n_i)+1$. We also propose a
  variant of Fr\"oberg's conjecture which allows us to estimate the
  complexity of solving overdetermined sparse systems.
\end{abstract}

\section{Introduction} {\bf Context and problem statement.} Many
polynomial systems or systems of Laurent polynomials arising in
applications do not have a dense monomial structure (\emph{e.g}
multi-homogeneous systems, fewnomials, systems invariant under the action of a linear
group,\ldots). The development of
toric geometry during the 70s/80s has led to toric (or
sparse) elimination theory \cite{Stu91}, a framework designed to
study and exploit algorithmically these monomial structures.

Central objects in toric geometry are \emph{semigroup algebras} (also
called toric rings).  If $\sgp\subset \Z^n$ is an affine semigroup
(see Def.~\ref{def:affinesgp}), then the semigroup algebra $k[\sgp]$
is the set of finite sums $\sum_{s\in\sgp} a_s X^s$, where $X$ is a
formal symbol, $k$ is a field, $a_s\in k$ and $s\in\sgp$. Semigroup
algebras are isomorphic to sub\-algebras of $k[X_1^{\pm 1},\ldots, X_n^{\pm 1}]$
generated by a finite subset of monomials.

Our motivation is to propose fast algorithms to solve
symbolically systems whose support lie in one of the following classes
of semigroups: semigroups constructed from the points with integer coordinates
in a normal lattice polytope $\polytope\subset\R^n$ (in that case, the
algorithms we propose are well-suited for \emph{unmixed} systems: the
Newton polytopes of the input polynomials are all equal to $\polytope$) or
semigroups generated by a scattered set of monomials (fewnomial
systems).

{\bf Main results.}  
Given a $0$-dim. system of Laurent polynomials $f_1=\dots=f_m=0$ and a
finite subset $M\subset\Z^n$ such that each polynomial
belongs to the subalgebra generated by $\{X_1^{\alpha_1}\cdots
X_n^{\alpha_n}\mid \alpha\in M\}$, we associate to $M$ two affine
semigroups: $\sgp_M\subset\Z^n$ generated by $M$ and
$\sgp_M^{(h)}\subset\Z^{n+1}$ generated by
$\{(\alpha,1)\in\Z^{n+1}\mid \alpha\in M\}$. Under the assumption that
$\sgp_M$ contains zero but no nonzero pairs $(s_1,s_2)\in S_M^2$
s.t. $s_1+s_2=\mathbf 0$, our solving strategy proceeds by combining a
sparse variant in the homogeneous algebra $k[\sgp_M^{(h)}]$ of the
{\tt MatrixF5} algorithm and a sparse variant in $k[\sgp_M]$ of the {\tt FGLM}
algorithm. We define a notion of \emph{sparse Gr\"obner basis}
(Def.~\ref{def:sGB}) that is computed by the {\tt sparse-MatrixF5} algorithm
if we know a bound on its maximal degree (this maximal degree is
called the \emph{witness degree} of the system).
An important feature of sparse GBs is that their definition depends
only on the ambient semigroup algebra and not on an embedding in a polynomial
algebra.
In this sense, they differ conceptually from SAGBI bases, even though
the {\tt sparse-FGLM} algorithm has similarities with the {\tt
  SAGBI-FGLM} algorithm proposed in \cite{FR09}. In the special case $S_M=\N^n$,
sparse Gr\"obner bases in $k[S_M]$ are classical Gr\"obner bases, and {\tt sparse-FGLM} is the usual {\tt FGLM} algorithm.

At the end of the solving
process, we obtain a rational parametrisation of the form
\[Q(T)=0 \quad \text{and} \quad \forall \alpha\in M\setminus\{\mathbf 0\},\quad X_1^{\alpha_1}\cdots X_n^{\alpha_n}-Q_{\alpha}(T)=0\]
where $Q\in k[T]$ is a univariate polynomial, and for all $\alpha\in M$, $Q_{\alpha}\in k(T)$ is a rational function.
Consequently, the solutions of the input sparse system 
can be expressed in terms of the roots of the univariate polynomial $Q$ by inverting a monomial map.

The next main result addresses the question of the complexity of this solving process when $M$ is given as the set $\polytope\cap\Z^n$, where $\polytope\subset\R^n$ is a lattice polytope of dimension $n$. It turns out that the complexities of {\tt sparse-MatrixF5} and {\tt sparse-FGLM} algorithms depend mainly on intrinsic combinatorial properties of $\polytope$:
\begin{itemize}
\item the normalized volume $\vol(\polytope)\in\N$;
\item the Castelnuovo-Mumford regularity $\reg(k[\sgp_{\polytope\cap\Z^n}^{(h)}])=n+1-\ell$ where $\ell$ is the smallest integer such that the intersection of $\Z^n$ with the interior of $\ell\cdot\polytope$ is nonempty;
\item the Ehrhart polynomial $\HP_\polytope(\ell)$ which equals the cardinality of $(\ell\cdot\polytope)\cap \Z^n$ for $\ell\in\N$.
\end{itemize}

We use as indicator of the complexity the \emph{witness degree} which
bounds the maximal ``sparse degree'' (corresponding to an $\N$-grading
on $k[\sgp_{\polytope\cap\Z^n}^{(h)}]$) in a reduced sparse Gr\"obner
basis.  More precisely, we obtain the following complexity estimates:

\begin{theorem}\label{theo:mainres}
Let $\polytope\subset\R^n$ be a normal lattice polytope of dimension
$n$ with one vertex at $\mathbf 0\in\Z^n$, $(d_1,\ldots,d_n)$ be a sequence of positive integers and
$(f_1,\ldots, f_n)$ be a regular sequence of Laurent polynomials in $k[X_1^{\pm 1},\ldots, X_n^{\pm 1}]^n$, such that the support of $f_i$
is included in $\{X_1^{s_1}\cdots X_n^{s_n}\mid s\in (d_i\cdot\polytope)\cap \Z^n\}$. Then a sparse GB of the
ideal $\langle f_1,\ldots, f_n\rangle\subset k[\sgp_{\polytope\cap \Z^n}]$ can be computed
within
$$O\left(n \HP_\polytope(\dwit)^\omega\right)$$ arithmetic operations
in $k$, where 
$\omega<2.373$ is a feasible exponent for the matrix multiplication
and $\dwit\leq\reg(k[\polytope])+1+\sum_{j=1}^n(d_j-1)$.
Moreover, if $\mathbf 0$ is a simple vertex of $\polytope$
(\emph{i.e.} a vertex which is the intersection of $n$ facets), then the
{\tt sparse-FGLM} algorithm executes at most
$$O\left(\HP_\polytope(1) \left(\vol(\polytope) \prod_{j=1}^n d_j\right)^3\right)$$
arithmetic operations in $k$.
\end{theorem}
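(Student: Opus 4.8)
The plan is to prove the two assertions separately, mirroring the two-phase solving strategy described in the introduction: the first bound controls the cost of the {\tt sparse-MatrixF5} computation in the graded algebra $k[\sgp_{\polytope\cap\Z^n}^{(h)}]$, while the second controls the cost of the {\tt sparse-FGLM} pass in $k[\sgp_{\polytope\cap\Z^n}]$. Throughout I would exploit that normality of $\polytope$ makes the degree-$\ell$ graded piece of $k[\sgp_{\polytope\cap\Z^n}^{(h)}]$ biject with $(\ell\cdot\polytope)\cap\Z^n$, so that its Hilbert function is exactly the Ehrhart polynomial $\HP_\polytope(\ell)$. This identification is what converts combinatorial data of $\polytope$ into matrix sizes.

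\textbf{First bound: {\tt sparse-MatrixF5}.} I would first argue the linear-algebra cost assuming the witness degree bound, then prove the bound. The algorithm builds Macaulay-type matrices degree by degree up to $\dwit$; in degree $d$ the columns are indexed by the $\HP_\polytope(d)$ monomials of that degree and the rows by the admissible shifts $m\cdot f_i$. Since there are $n$ incremental steps (one per generator $f_i$) and each degree-$d$ Gaussian elimination costs $O(\HP_\polytope(d)^\omega)$, summing over $d\le\dwit$ and bounding the sum by its top term gives the stated $O\!\left(n\,\HP_\polytope(\dwit)^\omega\right)$. To prove $\dwit\le\reg(k[\polytope])+1+\sum_{j=1}^n(d_j-1)$ I would pass to the homogenizations $f_j^{(h)}$, which are homogeneous of degree $d_j$ in $R:=k[\sgp_{\polytope\cap\Z^n}^{(h)}]$, and show that the regularity hypothesis on $(f_1,\dots,f_n)$ makes $(f_1^{(h)},\dots,f_n^{(h)})$ a regular sequence in $R$. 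Then, iterating the short exact sequences
\[
0\longrightarrow \bigl(R/\langle f_1^{(h)},\dots,f_{j-1}^{(h)}\rangle\bigr)(-d_j)\xrightarrow{\;\cdot f_j^{(h)}\;} R/\langle f_1^{(h)},\dots,f_{j-1}^{(h)}\rangle\longrightarrow R/\langle f_1^{(h)},\dots,f_j^{(h)}\rangle\longrightarrow 0
\]
and using that Castelnuovo--Mumford regularity satisfies $\reg(A/fA)\le\reg(A)+d-1$ for a homogeneous non-zero-divisor $f$ of degree $d$, one obtains $\reg\bigl(R/\langle f_1^{(h)},\dots,f_n^{(h)}\rangle\bigr)\le\reg(k[\polytope])+\sum_{j}(d_j-1)$. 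Finally I would relate $\dwit$ to this regularity: the witness degree is the degree at which the Hilbert function of the quotient reaches its stable (Hilbert-polynomial) value, which is at most one more than the regularity, yielding the extra $+1$.

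\textbf{Second bound: {\tt sparse-FGLM}.} Here the input is the quotient algebra $A=k[\sgp_{\polytope\cap\Z^n}]/\langle f_1,\dots,f_n\rangle$ presented as a finite-dimensional $k$-space of dimension $D=\dim_k A$ together with the multiplication structure read off from the sparse Gr\"obner basis of the first phase. I would identify $D$ via the Bernstein--Kushnirenko--Khovanskii theorem: since the sequence is regular and all Newton polytopes equal $d_j\cdot\polytope$, multilinearity of mixed volume gives $D=\prod_{j=1}^n d_j\cdot\vol(\polytope)$. The algorithm then enumerates the $D$ standard monomials and, for each, multiplies by the $\HP_\polytope(1)$ semigroup generators and reduces, each reduction amounting to linear algebra of cost $O(D^2)$; this yields the $O\!\left(\HP_\polytope(1)\,(\vol(\polytope)\prod_j d_j)^3\right)$ estimate. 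The simple-vertex hypothesis at $\mathbf 0$ is what guarantees the output has the claimed parametrized shape: the $n$ facets meeting at $\mathbf 0$ furnish a simplicial vertex cone, hence a local monomial coordinate system separating the torus solutions, ensuring that the univariate $Q(T)$ together with the rational functions $Q_\alpha(T)$ exists and is produced by the conversion.

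\textbf{Main obstacle.} I expect the genuinely delicate step to be the witness-degree bound, specifically two points with no counterpart in the classical dense proof: showing that the homogenized $f_j^{(h)}$ remain a regular sequence in the toric graded ring $R$, and correctly accounting for the intrinsic summand $\reg(k[\polytope])$, which is a purely sparse phenomenon governed by the smallest dilation whose interior meets the lattice. Tying the algorithmic quantity $\dwit$ to the algebraic regularity (and justifying the precise $+1$) is where the argument must be made airtight.
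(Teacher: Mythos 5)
Your two-phase split matches the paper's own proof, which is assembled from Theorem~\ref{theo:complF5} together with Lemma~\ref{lem:sparsedreg} for the first bound, and Theorem~\ref{theo:complFGLM} plus the Bernstein--Kushnirenko remark for the second. Your homological route to the regularity of the quotient (iterated short exact sequences giving $\reg(R/I)\le\reg(k[\polytope])+\sum_j(d_j-1)$, with $R=k[\polytope]$) is a genuine alternative to the paper's argument, which instead computes the Hilbert series explicitly: by Proposition~\ref{prop:HSsuitereg} and Stanley non-negativity, $\HS_{k[\polytope]/I}(t)=\HS_\polytope(t)\prod_i(1-t^{d_i})=\frac{Q(1)\prod_i d_i}{1-t}+K(t)$ with $\deg K=\reg(k[\polytope])-1+\sum_i(d_i-1)$. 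But the step you explicitly defer is the one that cannot be waved through: relating $\dwit$ (the top degree of minimal generators of $\LM(I)$) to the degree where the Hilbert function stabilizes. For general ideals this implication is \emph{false}: passing to an initial ideal can strictly increase regularity, so $\LM(I)$ may acquire minimal generators well beyond $\reg(R/I)+1$. The paper closes exactly this gap with a polytope-specific argument: the Hilbert function of $k[\polytope]/\LM(I)$ equals that of $k[\polytope]/I$, hence is constant for $d\ge\deg K+1$, and since $\ell\polytope\subset\ell'\polytope$ for $\ell<\ell'$, a standard monomial $X^{(s,d)}$ with $s\in(d\cdot\polytope)\cap\Z^n$ but $s\notin((d-1)\cdot\polytope)\cap\Z^n$ can occur only for $d\le\deg K+1$; therefore minimal generators of $\LM(I)$ have degree at most $\deg K+2$. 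Without this nested-dilation argument (or a substitute) your bound on $\dwit$ is unsupported; note also that the paper sidesteps your concern about regularity transferring to the homogenizations by assuming outright that the homogeneous sequence is regular in $k[\polytope]$. A smaller discrepancy: the paper does not sum eliminations over degrees or over incremental steps; it bounds the cost of reducing the \emph{single} Macaulay matrix in degree $\dwit$, whose at most $n\,\HP_\polytope(\dwit)$ rows (versus $\HP_\polytope(\dwit)$ columns) are the source of the factor $n$ via Storjohann's echelon-form complexity, whereas bounding your sum ``by its top term'' honestly yields an extra factor of $\dwit$.

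In the FGLM half you misidentify the role of the simple-vertex hypothesis. It has nothing to do with guaranteeing the shape of the rational parametrization; it makes the semigroup $\sgp_{\polytope\cap\Z^n}$ simplicial in the sense of Definition~\ref{def:simplicial}, and---together with Cohen-Macaulayness of $k[\sgp_{\polytope\cap\Z^n}]$, which follows from normality of $\polytope$ via Hochster's theorem \cite{Hoc72}---it triggers the property of \cite{rosales1998cohen} used in the proof of Theorem~\ref{theo:complFGLM}: for distinct Hilbert-basis elements $p_i\ne p_j$ and $s\in\sgp$, if $s-p_i\in\sgp$ and $s-p_j\in\sgp$ then $s-p_i-p_j\in\sgp$. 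This is precisely what lets the paper extend \cite[Prop.~2.1]{FGLM} and construct the $r$ multiplication matrices from the reduced sparse GB within $O(r\delta^3)$ operations. Your claim that each normal form costs $O(\delta^2)$ \emph{presupposes} those matrices; justifying their cheap construction is the only place where the geometry of the vertex enters, and your sketch omits it entirely. The rest of your accounting ($\delta\le\vol(\polytope)\prod_j d_j$ by Bernstein--Kushnirenko, $r\le\HP_\polytope(1)$) agrees with the paper.
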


Direct consequences of these formulas allow us to derive new
complexity bounds for solving regular multi-homo\-ge\-ne\-ous systems.
We show that the witness degree of a regular system of
$n$ multi-homogeneous polynomials of multi-degree $(d_1,\ldots,d_p)$
w.r.t. blocks of variables of sizes $(n_1,\ldots, n_p)$ (with $\sum
n_i=n$) is bounded by $n+2-\max_{i\in\{1,\ldots,p\}}(\lceil
(n_i+1)/d_i\rceil)$ (which generalizes the bound $\min(n_1,n_2)+1$ in
the bilinear case \cite{faugere2011grobner}). We also propose a variant of Fr\"oberg's
conjecture for sparse systems and a notion of semi-regularity, which
yield complexity estimates for solving sparse overdetermined systems.

We have implemented in C a prototype of the {\tt sparse-MatrixF5}
algorithm, that runs several times faster than the original $F_5$
algorithm in the FGb software. For instance, we report speed-up ratios
greater than 100 for instances of overdetermined bihomogeneous
systems. The implementation also works well for fewnomial systems
(although this case is not covered by our complexity analysis).

 {\bf Related works.}  Computational aspects of toric geometry
and Gr\"obner bases are investigated in \cite{Stu96}. In particular,
\cite[Subroutine 11.18]{Stu96} gives an algorithm to compute syzygies
of monomials in toric rings, which is an important routine for critical-pairs based algorithms.

Other approaches have been designed to take advantage of the sparse
structure in Gr\"obner bases computations. For instance, the Slim Gr\"obner
bases in \cite{brickenstein2010slimgb} describes strategies to avoid
increasing the number of monomials during computations. This
approach improves practical computations, but does not lead to new
asymptotic complexity bounds for classes of sparse systems.

The sparse structure and the connection with toric geometry have also
been incorporated to the theory of resultants, and a vast literature
has been written on this topic, see \emph{e.g.}
\cite{emiris2002symbolic, emiris2005toric,
  Canny99asubdivision,canny1993efficient}. In particular, mixed
monomials structures are well-understood in this context. Although we
do not know how to extend the algorithms proposed in this paper to
mixed structures, Gr\"obner-type algorithms enjoy the property of
extending without any modification to the overdetermined case.

 {\bf Perspectives.} Our approach is for the moment limited
to \emph{unmixed systems}: all input polynomials have to lie in the
same semigroup algebra. A possible extension of this work would be the
generalization to mixed systems (where the algorithms would depend on
the Newton polytope of each of the polynomials of the system). Some
results seem to indicate that such a generalization may be possible:
for instance, under genericity assumptions, mixed monomial bases of
quotient algebras are explicitly described in
\cite{pedersen1995mixed}. Also, a bound on the witness degree and the
complexity analysis is for the moment restricted to the polytopal
case. Merging the approach in this paper with a Buchberger's type
approach such as \cite[Algo.~11.17]{Stu96} could lead to a termination
criterion of the {\tt sparse-MatrixF5} algorithm in the non-regular cases
and for positive dimensional systems. Finally, finding complexity
bounds which explain the efficiency of the sparse Gr\"obner bases
approach for fewnomial systems (see Table \ref{table:fewnomials})
remains an open problem.

 {\bf Organisation of the paper.}  We recall in Section \ref{sec:background}
the background material on semigroup algebras and convex geometry that
will be used throughout this paper. Section \ref{sec:sGB} introduces sparse Gr\"obner bases and describes a
general solving process for sparse systems. The main algorithms
are described in Section \ref{sec:algos} and their complexities are analyzed in
Section \ref{sec:complexity}. Finally, we describe in Section \ref{sec:applis} some results
that are direct consequences of this new framework and experimental results in Section \ref{sec:expe}.  

 {\bf Acknowledgements.} We are grateful to Kaie Kubjas,
Guillaume Moroz and Bernd Sturmfels for helpful discussions and for
pointing out important references. This work was partly done while the
second author was supported and hosted by the Max Planck Institute for
Mathematics (Bonn, Germany). This work was partly supported by the HPAC grant of the French National Research Agency (HPAC ANR-11-BS02-013).

\section{Preliminaries and notations}\label{sec:background}
In this paper, the basic algebraic objects corresponding to monomials
in classical polynomial rings are \emph{affine semigroups}. We always
consider them embedded in $\Z^n$. We refer the reader to
\cite{MilStu05,CoxLitSch11,fulton1993introduction} for a more detailed
presentation of this background material.  First, we describe the main
notations that will be used throughout the paper:
\begin{definition}\label{def:affinesgp}
  An \emph{affine semigroup} $\sgp$ is a finitely-gene\-ra\-ted additive subsemigroup of
  $\Z^n$ for some $n\in\N$ containing $\mathbf 0\in\Z^n$ and no
  nonzero invertible element (\emph{i.e.} for all $s,s'\in\sgp\setminus\{\mathbf 0\}, s+s'\ne
  \mathbf 0$). Any affine semigroup has a unique minimal set of
  generators, called the \emph{Hilbert basis} of $\sgp$ and denoted by
  $\hilbert(\sgp)$. Let $\gp(\sgp)$ denote the smallest subgroup of
  $\Z^n$ containing $\sgp$. Then $\sgp$ is called \emph{normal} if
  $\sgp=\{q\in\gp(\sgp)\mid\exists c\in\N, c\cdot q\in \sgp\}$. For a
  field $k$, we let $k[\sgp]$ denote the associated \emph{semigroup
    algebra} of finite formal sums $\sum_{s\in \sgp}a_s X^s$
  where $a_s\in k$. An element $X^s\in k[\sgp]$ is called a
  \emph{monomial}.

  We use the letter $M$ to denote a finite subset of $\Z^n$ such that
  $\mathbf 0\in M$ and the semigroup $\sgp_M$ generated by $M$ contains no nonzero
  invertible element. Also, we let $\sgp_M^{(h)}$ denote the affine
  semigroup generated by $\{(\alpha,1)\mid \alpha\in
  M\}\subset\Z^{n+1}$. The semigroup algebra $k[\sgp_M^{(h)}]$ is
  \emph{homogeneous} (\emph{i.e.} $\N$-graded and generated by degree
  $1$ elements): the degree of a monomial $X^{(s_1,\ldots,s_n,d)}$ is
  $d\in\N$. The vector space of homogeneous elements of degree $d\in\N$ in $k[\sgp_M^{(h)}]$ is denoted by $k[\sgp_M^{(h)}]_d$.
\end{definition}

Depending on the articles on this topic, the condition ``$\sgp$
contains no invertible element'' is not always included in the
definition of an affine semigroup. However, this is a necessary
condition for the algorithms we propose in this paper. Also, the term
``Hilbert basis'' is sometimes reserved for affine semigroups of the
form $\cone\cap\Z^n$ where $\cone$ is a rational cone (see \emph{e.g.}
\cite[Prop.~7.15]{MilStu05} and the discussion after this
statement). We always assume implicitely that $\gp(S)\subset\Z^n$ is a
full rank lattice (this does not lose any generality since this case
can be reached by embedding $S$ in a lower dimensional
$\Z^{n'}$). Note that $k[\N^n]$ is the classical polynomial ring
$k[X_1,\ldots, X_n]$. Semigroup algebras are integral domains
\cite[Thm. 7.4]{MilStu05} of Krull dimension $n$ and play an important
role in toric geometry: they are precisely the coordinate rings of
\emph{affine toric varieties}.  

The normality of the semigroup $\sgp$
is an important property which implies that $k[\sgp]$ is
Cohen-Macaulay by a theorem by Hochster \cite{Hoc72}.
An important feature of normal affine semigroups is that they can be represented by the intersection of $\Z^n$ with a pointed rational polyhedral cone (also called \emph{stron\-gly convex rational polyhedral cone} \cite[Sec 1.1]{Oda88}).
\begin{definition}
A \emph{cone} $\cone\subset \R^n$ is a convex subset of $\R^n$ stable by multiplication by $\R_+$, the set of non-negative real numbers. The dimension $\dim(\cone)$ of a cone $\cone$ is the dimension of the linear subspace spanned by $\cone$. A cone is called \emph{pointed} if it does not contain any line. A pointed cone of dimension $1$ is called a ray. A ray is called \emph{rational} if it contains a point in $\Z^n$. A \emph{rational polyhedral cone} is the convex hull of a finite number of rational rays. Pointed rational polyhedral cones will be abbreviated \emph{PRPC}. 
\end{definition}

We shall use PRPCs in Section \ref{sec:sGB} to define admissible
monomial orderings in semigroup algebras.
We now recall the definition of \emph{simplicial affine semigroups}, for which we will be able to derive tight complexity bounds for the {\tt sparse-FGLM} algorithm (Section~\ref{sec:complexity}).

\begin{definition}\label{def:simplicial}
  An affine semigroup $S\subset \Z^n$ is called simplicial if the convex hull of $\R_{+}S$ is a simplicial PRPC, \emph{i.e.} the convex hull of $n$ linearly independant rays. 
\end{definition}

Another important family of objects are \emph{projective toric
  varieties}. Their homogeneous coordinate rings are associated to a
lattice polytope, which we shall assume to be normal in order to ensure that the coordinate ring is Cohen-Macaulay. As in the classical case, homogeneity is a
central concept to analyze the complexity of Gr\"obner bases
algorithms. All lattice polytopes will be assumed to be full dimensional.

\begin{definition}
A \emph{lattice polytope} $\polytope\subset\R^n$ is the convex hull of a
finite number of points in $\Z^n$. Its \emph{normalized volume}, \emph{i.e.} $n!$ times its Euclidean volume, is
denoted by $\vol(\polytope)\in\N$. 
To a lattice polytope $\polytope\subset\R^n$ is associated an affine
semigroup $S_{\polytope\cap\Z^n}^{(h)}\subset\Z^{n+1}$ generated by $\{(\alpha,1)\mid
\alpha\in\polytope\cap\Z^n\}$. The polytope $\polytope$ is called
\emph{normal} if $\sgp_{\polytope\cap\Z^n}^{(h)}$ is a normal semigroup. The
associated semigroup algebra is called a \emph{polytopal algebra} and
abbreviated $k[\polytope]$.
\end{definition}
If $\polytope\subset\R^n$ is a lattice polytope containing $\mathbf 0$ as a vertex, then $k[\polytope]=k[S_{\polytope\cap\Z^n}^{(h)}]$ (Def.~\ref{def:affinesgp}). Moreover, if $\polytope$ is normal, then so is $\sgp_{\polytope\cap \Z^n}$ \cite[Prop. 2.17]{CoxLitSch11}.
Also, note that if $\polytope'$ is a translation of $\polytope$, then the homogeneous algebras $k[\polytope]$ and $k[\polytope']$ are isomorphic. Consequently, we shall assume w.l.o.g. in the sequel that one of the vertices of $\polytope$ is the origin, so that $M=\polytope\cap\Z^n$ verifies the assumptions of Def.~\ref{def:affinesgp}. We also introduce a few more notations for lattice polytopes:

\begin{notation}
The number of lattice points in a polytope $\polytope\subset\R^n$
(\emph{i.e.} the cardinality of $\polytope\cap \Z^n$) is denoted by
$\#\polytope$.  The \emph{Minkowsky sum} of two lattice polytopes
$\polytope_1,\polytope_2\subset\R^n$ is the lattice polytope
$\{p_1+p_2 \mid p_1\in \polytope_1,p_2\in \polytope_2\}$. For all
$\ell\in \N$ we write $\ell\cdot\polytope$ for the Minkowski sum
$\polytope + \dots +\polytope$ with $\ell$ summands.  For $n\in\N$, we
let $\Delta_n\subset\R^n$ denote the \emph{standard simplex}, namely
the convex hull of $\mathbf 0$ and of the points $\mathbf e_i\in\R^n$
whose entries are zero except for the $i$th coefficient which is equal
to $1$.  For $\polytope_1\subset\R^i$,$\polytope_2\subset\R^j$ we write
$\polytope_1\times\polytope_2 \subset\R^{i+j}$ for the lattice polytope
whose points are $\{(p_1,p_2) \mid p_1\in\polytope_1,
p_2\in\polytope_2\}.$ 
\end{notation}

Next, we recall several useful classical properties of polytopal algebras. We refer to \cite[Ch.~12]{MilStu05} for a detailed presentation of the connections between Ehrhart theory and computational commutative algebra.
\begin{proposition}\label{prop:HSpolytope}
Let $\polytope\subset \R^n$ be a lattice polytope. For $d\in\N$, we
let $\HP_\polytope\in\Q[d]$ denote the \emph{Ehrhart polynomial} of
$\polytope$, \emph{i.e.} $\HP_\polytope(d)=\#(d\cdot\polytope)$. Also, let $\HS_\polytope(t)\in\Z[[t]]$ denote the generating series
$$\HS_\polytope(t)=\sum_{d\in \N}\HP_\polytope(d) t^d.$$
Then the Hilbert series of the polytopal algebra $k[\polytope]$, namely
$$\HS_{k[\polytope]}(t)=\sum_{d\in\N}\dim_k (k[\polytope]_d)t^d$$
is equal to $\HS_\polytope$ and there exists a polynomial $Q\in \Z[t]$ with non-negative coefficients such that
$$\HS_{\polytope}(t)=\dfrac{Q(t)}{(1-t)^{n+1}},\quad\deg(Q)\leq n.$$
\end{proposition}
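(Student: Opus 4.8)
The plan is to reduce everything to a single combinatorial identity --- that the degree-$d$ piece of $k[\polytope]$ is counted by the lattice points of $d\cdot\polytope$ --- and then to extract the rational form of the Hilbert series from the Cohen--Macaulay property. So first I would compute the Hilbert function. By Definition~\ref{def:affinesgp}, a $k$-basis of $k[\polytope]_d=k[\sgp_{\polytope\cap\Z^n}^{(h)}]_d$ consists of the monomials $X^{(s,d)}$ with $(s,d)\in\sgp_{\polytope\cap\Z^n}^{(h)}$, and every degree-$d$ element of $\sgp_{\polytope\cap\Z^n}^{(h)}$ is a sum of $d$ generators, hence of the form $(\alpha_1+\dots+\alpha_d,\,d)$ with each $\alpha_i\in\polytope\cap\Z^n$. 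The $d$-fold sumset of $\polytope\cap\Z^n$ is obviously contained in $(d\cdot\polytope)\cap\Z^n$; the reverse inclusion is the integer decomposition property, which holds precisely because $\polytope$ is normal (this is the geometric content of the normality of $\sgp_{\polytope\cap\Z^n}^{(h)}$, cf.~\cite[Prop.~2.17]{CoxLitSch11}). Hence $(s,d)\mapsto s$ is a bijection from the degree-$d$ slice of $\sgp_{\polytope\cap\Z^n}^{(h)}$ onto $(d\cdot\polytope)\cap\Z^n$, so that $\dim_k k[\polytope]_d=\#(d\cdot\polytope)=\HP_\polytope(d)$ for all $d\in\N$. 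Summing over $d$ gives $\HS_{k[\polytope]}=\HS_\polytope$.

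Next I would establish the rational form. On the one hand, Ehrhart's theorem (see \cite[Ch.~12]{MilStu05}) guarantees that $\HP_\polytope$ is a genuine polynomial of degree $n=\dim\polytope$; since the generating series of any polynomial sequence of degree $n$ is a rational function with denominator $(1-t)^{n+1}$ and numerator of degree at most $n$, we obtain $\HS_\polytope(t)=\tilde Q(t)/(1-t)^{n+1}$ with $\tilde Q\in\Z[t]$ and $\deg\tilde Q\le n$. On the other hand, since $\polytope$ is normal, Hochster's theorem \cite{Hoc72} asserts that $k[\polytope]$ is Cohen--Macaulay; it is generated in degree one and has Krull dimension $n+1$. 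After extending scalars to an infinite field (which leaves the Hilbert function unchanged), one can choose a linear homogeneous system of parameters $\theta_0,\dots,\theta_n$, over which $k[\polytope]$ is a finitely generated free module because it is Cohen--Macaulay. Multiplicativity of Hilbert series then yields $\HS_{k[\polytope]}(t)=Q(t)\,(1-t)^{-(n+1)}$, where $Q(t)=\sum_i\dim_k\bigl(k[\polytope]/(\theta_0,\dots,\theta_n)\bigr)_i\,t^i$ is a polynomial with non-negative integer coefficients, the Artinian reduction being finite-dimensional. Comparing the two expressions through the identity $\HS_{k[\polytope]}=\HS_\polytope$ of the first step, and using that the numerator over $(1-t)^{n+1}$ is uniquely determined, forces $Q=\tilde Q$; thus $Q$ has non-negative coefficients and $\deg Q\le n$, as claimed.

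The bookkeeping bijection and the standard generating-function identity for polynomial sequences are routine. The real content, and the step I expect to be the main obstacle, is the normality input: I must justify carefully that normality of $\sgp_{\polytope\cap\Z^n}^{(h)}$ delivers the integer decomposition property $(d\cdot\polytope)\cap\Z^n=\underbrace{(\polytope\cap\Z^n)+\dots+(\polytope\cap\Z^n)}_{d}$ for every $d$, rather than merely the saturation of the semigroup inside $\gp(\sgp_{\polytope\cap\Z^n}^{(h)})$; this is exactly where the full-dimensionality assumption and the identification $k[\polytope]=k[\sgp_{\polytope\cap\Z^n}^{(h)}]$ enter. The non-negativity of $Q$ likewise rests essentially on Cohen--Macaulayness: without it one only knows $Q\in\Z[t]$, and it is the free-module decomposition over the linear system of parameters that reinterprets the coefficients as dimensions of graded vector spaces.
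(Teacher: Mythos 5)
Your proposal is correct, and it is genuinely more self-contained than the paper's proof, which consists of three citations: Ehrhart's theorem for the polynomiality of $\HP_\polytope$, the bare remark that $\HS_\polytope=\HS_{k[\polytope]}$ ``follows from the definition of $k[\polytope]$'', and Stanley's non-negativity theorem \cite[Thm.~2.1]{stanley1980decompositions} for the non-negativity of $Q$. Where you differ is in the last step: instead of invoking Stanley's theorem, you re-derive it in the Cohen--Macaulay case via Hochster's theorem, a linear homogeneous system of parameters $\theta_0,\dots,\theta_n$ (after base change to an infinite field), and graded freeness over $k[\theta_0,\dots,\theta_n]$, so that the coefficients of $Q$ become dimensions of the graded pieces of the Artinian reduction; the bound $\deg Q\le n$ you then import from the Ehrhart side by uniqueness of the numerator over $(1-t)^{n+1}$. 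This is essentially Stanley's own argument for Cohen--Macaulay graded algebras, so it is a legitimate alternative; what the paper's citation buys is generality, since Stanley non-negativity holds for \emph{all} lattice polytopes, whereas your route requires normality --- a difference that is immaterial here because the equality $\HS_{k[\polytope]}=\HS_\polytope$ itself already requires normality (the statement omits the word ``normal'', but the surrounding section assumes it). On that first equality you are in fact more careful than the paper: the phrase ``follows from the definition'' hides exactly the integer decomposition property you isolate, namely that every point of $(d\cdot\polytope)\cap\Z^n$ is a sum of $d$ points of $\polytope\cap\Z^n$. You flag, but do not close, the gap between normality of $\sgp^{(h)}_{\polytope\cap\Z^n}$ (saturation inside its own group $\gp(\sgp^{(h)}_{\polytope\cap\Z^n})$) and this IDP relative to $\Z^{n+1}$; be aware that full-dimensionality alone does not close it --- the two notions coincide only when the lattice points of $\polytope$ generate $\Z^n$ (Reeve's simplex, whose vertex set is unimodular in the sublattice it generates, is ``normal'' in the paper's literal sense yet violates $\HS_{k[\polytope]}=\HS_\polytope$), so one should either adopt the definition of normality of \cite[Def.~2.2.9]{CoxLitSch11}, which is the IDP itself, or add the hypothesis $\gp(\sgp^{(h)}_{\polytope\cap\Z^n})=\Z^{n+1}$. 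Since the paper glosses over the same point, this is a defect you share with the original rather than one you introduce, and your write-up has the merit of making it visible.
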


\begin{proof}
The fact that the map $\HP_\polytope: d\mapsto \#(d\cdot\polytope)$ is polynomial is a classical result by Ehrhart \cite{ehrhart1962polyedres}. The second statement $\HS_\polytope=\HS_{k[\polytope]}$ follows from the definition of $k[\polytope]$. The last statement is Stanley's non-negativity theorem \cite[Thm.~2.1]{stanley1980decompositions}.
\end{proof}

We let $\reg(k[\polytope])$ denote the \emph{Castelnuovo-Mumford regularity} of $k[\polytope]$. The Castenuovo-Mumford regularity of a graded module is an important measure of its ``complexity'': it is related to the degrees where its local cohomology modules vanish. We refer to \cite[Ch.~15]{brodmann2012local} for a more detailed presentation. The following classical proposition relates the regularity with a combinatorial property of the polytope $\polytope$ and with the degree of the numerator of $\HS_\polytope$:

\begin{proposition}\label{prop:regpolytope}
Let $\polytope$ be a normal lattice polytope.  The regularity
$\reg(k[\polytope])$ equals $n-\ell+1$, where $\ell$ is the
smallest integer such that $\ell\cdot \polytope$ contains an integer
point in its interior. Moreover, with the same notations as
in Proposition \ref{prop:HSpolytope}, $\deg(Q)=\reg(k[\polytope])$.
\end{proposition}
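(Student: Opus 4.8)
The plan is to prove the two asserted equalities through a common intermediate quantity, the degree of the Hilbert-series numerator $Q$. Concretely, I would first establish $\reg(k[\polytope]) = \deg(Q)$ using the Cohen--Macaulay property, and then show independently that $\deg(Q) = n+1-\ell$ by a short Ehrhart-reciprocity computation; chaining the two yields $\reg(k[\polytope]) = n-\ell+1$ together with the second claim.

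For the first equality, recall that normality of $\polytope$ makes $k[\polytope]$ a normal affine semigroup algebra, hence Cohen--Macaulay by Hochster's theorem, of Krull dimension $n+1$, and standard graded (it is generated in degree $1$ by the $(\alpha,1)$). After base change to an infinite field, which changes neither the Hilbert series nor the regularity, I would pick a homogeneous system of parameters $\theta_1,\dots,\theta_{n+1}$ of degree $1$; by Cohen--Macaulayness this is a regular sequence, so the Artinian quotient $A = k[\polytope]/(\theta_1,\dots,\theta_{n+1})$ has Hilbert series $(1-t)^{n+1}\HS_\polytope(t) = Q(t)$. Since quotienting by a linear nonzerodivisor leaves the regularity unchanged, $\reg(k[\polytope]) = \reg(A)$; and for an Artinian standard-graded algebra the regularity is exactly the top nonvanishing degree, which by Proposition~\ref{prop:HSpolytope} (non-negativity of the coefficients of $Q$) equals $\deg(Q)$. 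This gives $\reg(k[\polytope]) = \deg(Q)$.

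For the second equality I would invoke Ehrhart--Macdonald reciprocity, $\#(\mathrm{int}(d\cdot\polytope)\cap\Z^n) = (-1)^n \HP_\polytope(-d)$, which at the level of rational generating functions reads
\[
\sum_{d\ge 1}\#\bigl(\mathrm{int}(d\cdot\polytope)\cap\Z^n\bigr)\,t^d \;=\; (-1)^{n+1}\HS_\polytope(1/t) \;=\; \frac{\sum_{i=0}^{\deg Q} h_i\, t^{\,n+1-i}}{(1-t)^{n+1}},
\]
where $Q(t)=\sum_i h_i t^i$. Writing $s=\deg Q$ and using $h_s>0$ (Proposition~\ref{prop:HSpolytope}), the numerator on the right has lowest-degree term $h_s\, t^{\,n+1-s}$, so expanding $1/(1-t)^{n+1}$ shows the power series starts in degree $n+1-s$ with positive coefficient. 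Hence the smallest $d$ for which $d\cdot\polytope$ contains an interior lattice point is $\ell = n+1-s$, i.e.\ $\deg(Q) = n+1-\ell$.

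The combinatorial reciprocity step is self-contained, so the main obstacle is the Cohen--Macaulay input to the first equality: one must justify (i) the existence of a degree-$1$ system of parameters, handled by the field extension, and (ii) that Castelnuovo--Mumford regularity is invariant under reduction modulo a linear nonzerodivisor and collapses to the top degree in the Artinian case. Both are standard facts about local cohomology and the $a$-invariant (equivalently $\reg(k[\polytope]) = a(k[\polytope])+(n+1)$, with $a(k[\polytope]) = \deg Q - (n+1) = -\ell$), but they are where the real content sits; everything else is bookkeeping on the Hilbert series.
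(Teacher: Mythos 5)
Your proof is correct, and it reaches the statement by a genuinely different route than the paper. Both arguments relate the three quantities $\reg(k[\polytope])$, $\deg(Q)$ and $n+1-\ell$, and both use Ehrhart--MacDonald reciprocity to identify $\deg(Q)$ with $n+1-\ell$ (the paper does this via a partial-fraction expansion of $\HS_\polytope$, showing that $d=n-\deg(Q)+1$ is the smallest positive integer with $\HP_\polytope(-d)\neq 0$; your generating-function computation is the same fact in equivalent form). The difference is which remaining edge of the triangle is proved directly: the paper quotes \cite[Sec.~5.4]{bruns1997normal} for the combinatorial formula $\reg(k[\polytope])=n-\ell+1$ and then \emph{deduces} $\deg(Q)=\reg(k[\polytope])$, whereas you prove $\reg(k[\polytope])=\deg(Q)$ directly by commutative algebra---Hochster's theorem gives Cohen--Macaulayness, a linear system of parameters (after base change to an infinite field) gives an Artinian reduction $A$ with $\HS_A(t)=Q(t)$, and regularity is preserved by such a reduction and equals the top nonvanishing degree in the Artinian case---and then \emph{deduce} the combinatorial formula. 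What your route buys is self-containedness: the external reference is replaced by standard facts on local cohomology and the $a$-invariant, and it becomes transparent that normality enters only through Cohen--Macaulayness; what the paper's route buys is brevity. One minor simplification available to you: once you know $\HS_A(t)=Q(t)$, the coefficients of $Q$ are the dimensions $\dim_k A_d$, so their non-negativity is automatic and the appeal to Proposition~\ref{prop:HSpolytope} at that point is redundant; in the reciprocity step you only need $h_{\deg Q}\neq 0$ together with the fact that the numerator $\sum_i h_i t^{n+1-i}$ has no terms below degree $n+1-\deg Q$.
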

\begin{proof}
The first claim follows from \cite[Sec.~5.4]{bruns1997normal}. To prove the second claim, we use the partial fraction expansion of $\HS_\polytope$ which is of the form $\sum_{\ell=n+1-\deg(Q)}^{n+1}\frac{a_\ell}{(1-t)^\ell}$ with $a_{n+1-\deg(Q)}\neq 0$. Then we obtain the equality $\HP_\polytope(d)=\sum_{\ell=n+1-\deg(Q)}^{n+1}\frac{a_\ell}{(\ell-1)!}\prod_{j=1}^{\ell-1}(d+j)$, and hence $d=n-\deg(Q)+1$ is the smallest positive integer such that $\HP_\polytope(-d)\neq 0$.
The Ehrhart-MacDonald reciprocity \cite{Mac71} concludes the proof.
\end{proof}

\section{Sparse Gr\"obner bases}\label{sec:sGB}
In this section, we show that classical Gr\"obner bases algorithms
extend to the context of semigroup algebras.  First, we need to extend
the notion of admissible monomial ordering and of Gr\"obner bases.  We
recall that the monomials of a semigroup algebra $k[\sgp]$ are the
elements $X^s$ for $s\in\sgp$.

\begin{definition}\label{def:sGB}
Let $\sgp$ be an affine semigroup. A total ordering on the monomials of $k[\sgp]$ is called \emph{admissible} if 
\begin{itemize}
\item it is compatible with the internal law of $\sgp$: for any $s_1,s_2,s_3\in \sgp$, $X^{s_1}\prec X^{s_2}\Rightarrow X^{s_1+s_3}\prec X^{s_2+s_3}$;
\item for any $s\in \sgp\setminus\{\mathbf 0\}$, $X^{\mathbf 0}\prec X^{s}$.
\end{itemize}
For a fixed admissible ordering $\prec$ and for any element $f\in k[\sgp]$, we let $\LM(f)$ denote its leading monomial. Similarly, for any ideal $I\subset k[\sgp]$, $\LM(I)$ denotes the ideal generated by $\{\LM(f) \mid f \in I\}$. A finite subset $G\subset I$ is called a \emph{sparse Gr\"obner basis} (abbreviated sGB) of $I$ with respect to $\prec$ if the set $\{\LM(g) \mid g \in G\}$ generates $\LM(I)$ in $k[\sgp]$.
\end{definition}
Note that admissible orderings exist for any semigroup algebra: the
convex hull of a semigroup $\sgp\subset\Z^n$ is a PRPC $\cone\subset\R^n$
(this is a consequence of the fact that there is no nonconstant invertible monomial
 in $k[\sgp]$).  Now one can pick $n$ independant
linear forms $(\ell_1,\ldots,\ell_n)$ with integer coefficients in the
dual cone $\cone^*=\{\text{linear forms }\ell:\R^n\rightarrow \R \mid
\forall \mathbf x\in \cone,\ell(\mathbf x)\geq 0\}$, and set $X^{s_1}\prec X^{s_2}$ if
and only if the vector $(\ell_1(s_1),\ldots,\ell_n(s_1))$ is smaller
than $(\ell_1(s_2),\ldots,\ell_n(s_2))$ for a classical admissible
ordering on $\N^n$.

Note that the assumption that $k[\sgp]$ contains no nonconstant invertible monomial is
a necessary and sufficient condition for the existence of an admissible ordering.

\smallskip

We describe now an algorithmic framework to solve sparse systems of Laurent polynomials.
Let $M\subset \Z^n$ be a finite subset verifying the assumptions of Definition \ref{def:affinesgp}, and $f_1,\ldots, f_m\in k[X_1^{\pm 1},\ldots, X_n^{\pm 1}]$ be Laurent polynomials such that the supports of the $f_i$ are included in $\{X_1^{\alpha_1}\cdots X_n^{\alpha_n}\mid\alpha\in S_M\}$. 
Note that translating $M$ amounts to multiplying the Laurent polynomials by Laurent monomials: this does not change the set of solutions of the system in the torus $\left(\overline k\setminus\{0\}\right)^n$.

Assuming that the system $f_1=\dots=f_m=0$ has finitely-many solutions in $(\overline k\setminus\{0\})^n$, we proceed as follows:
\begin{enumerate}
\item homogenize $(f_1,\ldots, f_m)$ via Def.-Prop.~\ref{def:hom} (note that the homogenization depends on the choice of the (not necessarily minimal) generating set $M$;
\item compute a sparse Gr\"obner
  basis w.r.t. a graded ordering of the homogeneous ideal $I=\langle f_1^{(h)},\ldots, f_m^{(h)}\rangle\subset
  k[\sgp_M^{(h)}]$ by using a variant of $F_4$/$F_5$
  algorithm (Algo.~\ref{algo:matrixF5}). 
\item dehomogenize the output to obtain a sGB of the ideal $\langle f_1,\ldots,f_m\rangle\subset k[\sgp_M]$ (Prop.~\ref{prop:deshom});
\item use a sparse variant of FGLM to obtain a $0$-dim. triangular system (hence containing a univariate polynomial) whose solutions are the image of the toric solutions of $f_1=\dots=f_m=0$ by monomial maps (Algo.~\ref{algo:sparse-fglm});
\item compute the non-zero roots of the univariate polynomial and invert the monomial map to get the solutions.
\end{enumerate}

We focus on the four first steps of this process.  The fifth step
involves computing the roots of a univariate polynomial, for which
dedicated techniques exist and depend on the field $k$. It also
involves inverting a monomial map, which can be achieved by solving a
consistent linear system of $\#\hilbert(S_M)$ equations in $n$ unknowns.

In the sequel of this section, we investigate the behavior of sparse
Gr\"obner bases under homogenization and dehomogeneization (Steps 1 and
3). We refer the reader to \cite[Ch. 2]{CoxLitSch11} for geometrical
aspects of projective toric varieties and their affine charts. If $M$
verifies the assumptions of Def.~\ref{def:affinesgp}, then there is a
canonical dehomogenization map:

\begin{definition}\label{def:dehom}
With the notations of Def.~\ref{def:affinesgp}, there is a \emph{dehomogeneization morphism} $\chi_M$ defined by
$$\begin{array}{rccc}
\chi_M:&k[\sgp_M^{(h)}]&\rightarrow&k[\sgp_M]\\
&X^{(s,d)}&\mapsto&X^{s}
\end{array}$$
\end{definition}

\begin{defprop}\label{def:hom}
With the notations of Def. \ref{def:affinesgp}, for any $f\in k[S_M]$, we call degree of $f$, the number $\deg(f)=\min \{d\in\N\mid \chi_M^{-1}(f)\cap k[S_M^{(h)}]_d\neq \emptyset\}$. Moreover the set $\chi_M^{-1}(f)\cap k[S_M^{(h)}]_{\deg(f)}$ contains a unique element, called the \emph{homogenization} of $f$.
\end{defprop}
\begin{proof}
The only statement to prove is that $\chi_M^{-1}(f)\cap k[S_M^{(h)}]_{\deg(f)}$ contains a unique element. Let $f_1^{(h)},f_2^{(h)}\in \chi_M^{-1}(f)\cap k[S_M^{(h)}]_{\deg(f)}$. Then $\chi_M(f_1^{(h)}-f_2^{(h)})=0$, which implies $f_1^{(h)}=f_2^{(h)}$.
\end{proof}

The next step is to prove that dehomogenizing a homogeneous Gr\"obner basis (with respect to a graded ordering) gives a Gr\"obner basis of the dehomogenized ideal.

\begin{definition}
An admissible monomial ordering $\prec$ on $k[\sgp_M^{(h)}]$
 is called \emph{graded} if there exists an associated ordering $\prec'$ on $k[\sgp_M]$ such that
$$
X^{(s_1,d_1)}\prec X^{(s_2,d_2)} \Leftrightarrow \left\{\begin{array}{c}d_1<d_2\text{ or}\\
d_1=d_2\text{ and } X^{s_1}\prec'X^{s_2}
 \end{array}\right.$$
\end{definition}
\begin{proposition}\label{prop:deshom}
Let $G$ be an homogeneous sGB of an homogeneous ideal $I\subset k[\sgp_M^{(h)}]$ with respect to a graded ordering. Then 
$\chi_M(G)$ is a sGB of $\chi_M(I)$ with respect to the associated ordering on $k[\sgp_M]$.
\end{proposition}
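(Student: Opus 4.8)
The plan is to reduce the statement to a single commutation lemma between leading monomials and $\chi_M$, and then to exploit the existence of a homogenizing monomial in $k[\sgp_M^{(h)}]$ so that any element of $\chi_M(I)$ admits a \emph{homogeneous} preimage inside $I$. First I would establish the lemma: for every nonzero homogeneous $g\in k[\sgp_M^{(h)}]$ one has $\LM(\chi_M(g))=\chi_M(\LM(g))$. Indeed, if $g$ is homogeneous of degree $d$, all its monomials are of the form $X^{(s_i,d)}$ with the $s_i$ pairwise distinct, so the monomials $\chi_M(X^{(s_i,d)})=X^{s_i}$ of $\chi_M(g)$ are again pairwise distinct; hence $\chi_M$ restricts to a bijection between the supports with no cancellation, and by the very definition of a graded ordering it is order-preserving on this fixed-degree slice, since $X^{(s_i,d)}\prec X^{(s_j,d)}\Leftrightarrow X^{s_i}\prec'X^{s_j}$. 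The $\prec$-largest monomial of $g$ is therefore sent to the $\prec'$-largest monomial of $\chi_M(g)$.

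Next I would introduce the homogenizing monomial. Because $\mathbf 0\in M$, the element $x_0:=X^{(\mathbf 0,1)}$ lies in $k[\sgp_M^{(h)}]$, is homogeneous of degree $1$, and satisfies $\chi_M(x_0)=1$. Multiplying a homogeneous element by $x_0$ raises its degree by one while leaving its image under $\chi_M$ unchanged. Consequently, given any $F=\sum_{d=0}^{D}F_d\in I$ with homogeneous parts $F_d$ (each in $I$, as $I$ is homogeneous), the element $\tilde F:=\sum_{d=0}^{D}x_0^{\,D-d}F_d$ is homogeneous of degree $D$, still lies in $I$, and satisfies $\chi_M(\tilde F)=\chi_M(F)$. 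This produces a homogeneous preimage in $I$ of any given element of $\chi_M(I)$.

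With these two ingredients the two inclusions follow. The inclusion $\langle \LM(\chi_M(g))\mid g\in G\rangle\subseteq\LM(\chi_M(I))$ is immediate since $\chi_M(g)\in\chi_M(I)$. For the reverse inclusion I would take a nonzero $f\in\chi_M(I)$, set $X^s=\LM(f)$, lift $f$ to some $F\in I$, and form the homogeneous $\tilde F\in I$ as above, so that $\chi_M(\tilde F)=f$. The lemma then gives $\chi_M(\LM(\tilde F))=\LM(\chi_M(\tilde F))=\LM(f)=X^s$. Since $G$ is an sGB and $\tilde F\in I$ is homogeneous, $\LM(\tilde F)\in\LM(I)=\langle \LM(g)\mid g\in G\rangle$, so some $\LM(g)$ divides $\LM(\tilde F)$ in $k[\sgp_M^{(h)}]$; applying the ring morphism $\chi_M$ preserves monomial divisibility, whence $\chi_M(\LM(g))=\LM(\chi_M(g))$ divides $X^s$ in $k[\sgp_M]$, witnessing $X^s\in\langle \LM(\chi_M(g))\rangle$.

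The hard part is exactly the step where a chosen preimage $F$ of $f$ need not be homogeneous, and its naive homogenization $f^{(h)}$ need not belong to $I$; the trick with $x_0$ circumvents this obstruction in the same way the extra variable does in the classical graded-order dehomogenization theorem. The remaining facts I rely on are routine: that a monomial belongs to a monomial ideal precisely when it is divisible by one of the monomial generators, and that the ring homomorphism $\chi_M$ preserves divisibility of monomials.
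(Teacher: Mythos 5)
Your proof is correct and takes essentially the same route as the paper's: the key commutation $\chi_M(\LM(g))=\LM(\chi_M(g))$ on homogeneous components, lifting an element of $\chi_M(I)$ to a homogeneous element of $I$, and pushing divisibility of leading monomials through the morphism $\chi_M$. The only difference is that you spell out two steps the paper leaves implicit — the proof of the commutation lemma and the explicit construction of the homogeneous preimage via the monomial $x_0=X^{(\mathbf 0,1)}$ — which fills genuine gaps rather than changing the argument.
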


\begin{proof}
First, notice that $\chi_M$ commutes with leading monomials on homogeneous components of $k[\sgp_M^{(h)}]$: for any $f\in k[\sgp_M^{(h)}]_d$, $\chi_M(\LM(f))=\LM(\chi_M(f))$. Let $f\in\chi_M(I)$ and $f^{(h)}\in I$ be a homogeneous polynomial such that $f$ is equal to $\chi_M(f^{(h)})$. Consequently, there exists $g\in G$ such that $\LM(g)$ divides $\LM(f^{(h)})$. Applying $\chi_M$, we obtain that $\LM(\chi_M(g))$ divides $\LM(\chi_M(f^{(h)}))=\LM(f)$. Therefore $\chi_M(G)$ is a sGB of $\chi_M(I)$ for the associated ordering.
\end{proof}

\section{Algorithms}\label{sec:algos}
\subsection{Sparse-MatrixF5 algorithm}
As pointed out in \cite{lazard1983grobner}, classical Gr\"obner bases algorithms are related to linear algebra via the Macaulay matrices.
Since $k[\sgp_M^{(h)}]$ is generated by elements of degree $1$, the following proposition shows that similar matrices can be constructed in the case of semigroup algebras:
\begin{proposition}
Any monomial of degree $d$ in $k[\sgp_M^{(h)}]$ is equal to a product
of a monomial of degree $d-1$ by a monomial of degree $1$.
\end{proposition}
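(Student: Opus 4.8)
The plan is to unwind the definition of $\sgp_M^{(h)}$ and use directly the fact that it is generated by its degree-$1$ elements. Recall that a monomial of degree $d$ in $k[\sgp_M^{(h)}]$ is by definition $X^{(s,d)}$ for some element $(s,d)\in\sgp_M^{(h)}$ whose last coordinate equals $d$, and that the degree-$1$ monomials are exactly the generators $X^{(\alpha,1)}$ with $\alpha\in M$. So the statement to establish is the semigroup-theoretic fact that every element of $\sgp_M^{(h)}$ with last coordinate $d\geq 1$ can be written as the sum of a generator $(\alpha,1)$ and an element with last coordinate $d-1$.

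First I would invoke the generation of $\sgp_M^{(h)}$ by $\{(\alpha,1)\mid\alpha\in M\}$: any $(s,d)\in\sgp_M^{(h)}$ can be written as a finite non-negative integer combination
\[
(s,d)=\sum_{\alpha\in M} c_\alpha\,(\alpha,1),\qquad c_\alpha\in\N.
\]
Reading off the last coordinate gives $d=\sum_{\alpha\in M}c_\alpha$, which uses crucially that every generator has last coordinate exactly $1$. Since $d\geq 1$, at least one coefficient is positive, say $c_{\alpha_0}\geq 1$.

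Next I would factor out one copy of that generator and write
\[
(s,d)=(\alpha_0,1)+\Bigl[(c_{\alpha_0}-1)(\alpha_0,1)+\sum_{\alpha\neq\alpha_0}c_\alpha(\alpha,1)\Bigr].
\]
The bracketed term is again a non-negative combination of the generators, hence lies in $\sgp_M^{(h)}$, and its last coordinate is $\sum_\alpha c_\alpha-1=d-1$. Translating back to monomials, this reads $X^{(s,d)}=X^{(\alpha_0,1)}\cdot X^{(s',d-1)}$, a product of a degree-$1$ monomial and a degree-$(d-1)$ monomial, which is exactly the claim.

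There is no genuine obstacle here: the result is essentially a restatement of the fact, already recorded in Def.~\ref{def:affinesgp}, that $k[\sgp_M^{(h)}]$ is $\N$-graded and generated in degree $1$. The only point deserving care is the verification that the last coordinate is additive and that each generator contributes exactly $1$ to it, so that the sum of the coefficients $c_\alpha$ equals the total degree $d$; this is built into the construction of $\sgp_M^{(h)}$ from $\{(\alpha,1)\mid\alpha\in M\}$ and guarantees both that a positive coefficient exists when $d\geq 1$ and that the residual element sits in degree $d-1$.
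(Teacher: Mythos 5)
Your proof is correct and coincides with the paper's own (implicit) argument: the paper states this proposition without any proof, treating it as immediate from the fact, recorded in Def.~\ref{def:affinesgp}, that $\sgp_M^{(h)}$ is generated by the elements $(\alpha,1)$ with $\alpha\in M$, so that the last coordinate of any element equals the number of generators (counted with multiplicity) in any expression for it. Your write-up --- expressing $(s,d)=\sum_{\alpha}c_\alpha(\alpha,1)$, noting $\sum_\alpha c_\alpha=d\geq 1$, and splitting off one generator --- is exactly that reasoning made explicit, so there is nothing to correct.
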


With the notations of Def.~\ref{def:affinesgp}, $k[\sgp_M^{(h)}]$ has the following property: for any $f_1\in k[\sgp_M^{(h)}]_d$, and for all $\ell\geq d$, there exists $f_2\in k[\sgp_M^{(h)}]_\ell$ s.t. $\chi_M(f_1)=\chi_M(f_2)$. This leads to the following definition of a $D$-Gr\"obner basis:
\begin{definition}
  Let $I\subset k[\sgp_M^{(h)}]$ be a homogeneous ideal and $\prec$ be an admissible monomial ordering on $k[\sgp_M]$. Then a finite subset $G\subset I$ is called a $D$-sGB of $I$ if for any homogeneous polynomial $f\in I$ with $\deg(f)\leq D$, there exists $g\in G$ such that $\LM(g)$ divides $\LM(f)$.
\end{definition}
 Note that for any $D\in\N$ there always exists a homogeneous $D$-sGB of $I$. A $D$-sGB of $I$ can be deduced from a row echelon basis of the $k$-vector space $I\cap k[\sgp_M^{(h)}]_D$, and can be computed via the Macaulay matrix:
\begin{definition}
  Let $f_1,\ldots,f_m$ be homogeneous polynomials in $k[\sgp_M^{(h)}]$. Then the \emph{Macaulay matrix} in degree $d\in\N$ of $f_1,\ldots, f_m$ is a matrix with $\sum_{i=1}^m\max(\HF_{k[\sgp_M^{(h)}]}(d-\deg(f_i)),0)$ rows, $\HF_{k[\sgp_M^{(h)}]}(d)$ columns and entries in $k$ (where $\HF_{k[S_M^{(h)}]}$ is the Hilbert function of $k[S_M^{(h)}]$). Rows are indexed by the products $X^{(s,d-\deg(f_i))}\cdot f_i$ where $X^{(s,d-\deg(f_i))}\in k[\sgp_M^{(h)}]$. Columns are indexed by monomials of degree $d$ and are sorted in decreasing order w.r.t. an admissible monomial ordering. The entry at the intersection of the row $X^{(s,d-\deg(f_i))}\cdot f_i$ and the column $X^{(s',d)}$ is the coefficient of $X^{(s',d)}$ in $X^{(s,d-\deg(f_i))}\cdot f_i$.
\end{definition}

By a slight abuse of notation, we identify implicitely a row in the Macaulay matrix of degree $d$ with the corresponding polynomial in $k[\sgp_M^{(h)}]_d$. The relation between the Macaulay matrix and a $D$-sGB is given by:
\begin{defprop}\label{defprop:dwit}
Let $f_1,\ldots, f_m\in k[S_M^{(h)}]$ be homogeneous polynomials, $\prec$ a graded monomial ordering, and for $d\in\N$, let $G_d$ be the set of polynomials corresponding to the rows of the reduced row echelon form of the Macaulay matrix in degree $d$ of $f_1,\ldots f_m$. Then we have
$$\begin{array}{c}
\text{for any $D\in\N$ }, G_0\cup \dots\cup G_D\text{ is a $D$-sGB of $I$},\\
\text{and }\chi_M(G_0)\subset\chi_M(G_1)\subset\chi_M(G_2)\subset \dots
\end{array}$$
The smallest integer $\ell$ such that $\chi_M(G_\ell)$ is a sGB of the ideal $\chi_M(\langle f_1,\ldots, f_m\rangle)$ is called \emph{the witness degree} and noted $\dwit$. 
\end{defprop}

\begin{proof}
The first statement ($G_0\cup \dots\cup G_D\text{ is a $D$-sGB of
  $I$}$) follows from the fact that $G_d$ is a triangular basis of the
vector space $k[S_M^{(h)}]_d$. The second statement is deduced from
the inclusions $\chi_M(k[S_M^{(h)}]_0)\subset
\chi_M(k[S_M^{(h)}]_1)\subset\dots$. Let $G$ be a sGB of $\langle f_1,\ldots, f_m\rangle$. Then $\dwit$ is bounded above by $\max\{\deg(g)\mid g\in G\}$ and is therefore finite.
\end{proof}

As in the original $F_5$ algorithm~\cite{Fau02a}, many lines are reduced to 0 during row-echelon form computations of Ma\-cau\-lay matrices. The $F_5$ criterion~\cite{Fau02a,EF14}\cite[Prop.~6]{BFS13} extends without any major difficulty in this context and identifies all reductions to zero when the input system is a regular sequence in $k[\sgp_M^{(h)}]$:
\begin{lemma}[$F_5$-criterion]\label{lem:F5crit}
  With the notations of Algorithm \ref{algo:matrixF5}, 
  if \(m\) is the leading monomial of a row in $\widetilde{\mathcal{M}}_{d-d_i,i-1}$ then the polynomial $m f_i$ belongs to the vector space \[
  \Span_k(Rows(\mathcal{M}_{d,i-1}) \cup \{uf_i \ | \ u \in k[\sgp_M^{(h)}]_{d-d_i} \ \text{and}\ u\prec m\}).
\]
\end{lemma}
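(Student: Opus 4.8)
The plan is to transcribe the classical $F_5$-criterion argument into the graded semigroup algebra $k[\sgp_M^{(h)}]$, the only thing to verify being that its $\N$-grading (generated in degree $1$) supports the same graded decomposition of ideals as an ordinary polynomial ring. First I would fix the row $g$ of $\widetilde{\mathcal{M}}_{d-d_i,i-1}$ whose leading monomial is $m$; since $\widetilde{\mathcal{M}}_{d-d_i,i-1}$ is in reduced row-echelon form, $g$ has leading coefficient $1$ and I can write $g = m + h$, where every monomial occurring in $h$ is strictly smaller than $m$ for $\prec$. As $g$ is homogeneous of degree $d-d_i$, these monomials, as well as $m$ itself, all lie in $k[\sgp_M^{(h)}]_{d-d_i}$.

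The second step is to locate $g f_i$. Because row-echelon reduction preserves the $k$-row space, $g$ belongs to the span of the rows of $\mathcal{M}_{d-d_i,i-1}$, each of which is a monomial multiple of some $f_k$ with $k\leq i-1$; hence $g\in\langle f_1,\ldots,f_{i-1}\rangle\cap k[\sgp_M^{(h)}]_{d-d_i}$, and therefore $g f_i\in\langle f_1,\ldots,f_{i-1}\rangle\cap k[\sgp_M^{(h)}]_{d}$. Now I would use that $k[\sgp_M^{(h)}]$ is generated in degree $1$: for a homogeneous ideal this forces the degree-$d$ graded piece of $\langle f_1,\ldots,f_{i-1}\rangle$ to equal $\sum_{k=1}^{i-1} k[\sgp_M^{(h)}]_{d-d_k}\,f_k$, which is exactly $\Span_k(Rows(\mathcal{M}_{d,i-1}))$ (this row space being preserved by the reductions performed in the algorithm). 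Thus $g f_i\in\Span_k(Rows(\mathcal{M}_{d,i-1}))$.

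Finally I would treat $h f_i=\sum_{m'\prec m}a_{m'}\,m' f_i$: each $m'$ is a monomial of $k[\sgp_M^{(h)}]_{d-d_i}$ with $m'\prec m$, so $m' f_i$ lies in the set $\{u f_i \mid u\in k[\sgp_M^{(h)}]_{d-d_i},\ u\prec m\}$, whence $h f_i\in\Span_k(\{u f_i \mid u\prec m\})$. Writing $m f_i = g f_i - h f_i$ then places $m f_i$ in the required span, which proves the claim. The argument is essentially the classical one, so I do not expect a genuine obstacle; the only points needing care are the degree bookkeeping (products of monomials stay monomials of the additive degree, because $\sgp_M^{(h)}$ is closed under addition) and the graded-component identity $\langle f_1,\ldots,f_{i-1}\rangle\cap k[\sgp_M^{(h)}]_d=\sum_k k[\sgp_M^{(h)}]_{d-d_k}f_k$, which is precisely where the degree-$1$ generation of $k[\sgp_M^{(h)}]$ is used.
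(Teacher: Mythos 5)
Your overall skeleton (write the row as $g=m+h$ with $h$ supported on monomials $\prec m$, note $g\in\langle f_1,\ldots,f_{i-1}\rangle$, and split $mf_i=gf_i-hf_i$) is exactly the classical $F_5$ argument that the paper itself appeals to by citation, and the degree bookkeeping in the semigroup setting is handled correctly. However, there is a genuine gap at the one step that actually carries the weight of the proof: you identify $\Span_k(Rows(\mathcal{M}_{d,i-1}))$ with the full graded piece $\sum_{k=1}^{i-1}k[\sgp_M^{(h)}]_{d-d_k}f_k$ and justify this by saying the row space is ``preserved by the reductions performed in the algorithm.'' That justification is false as stated: the matrices $\mathcal{M}_{d,i-1}$ built by Algorithm \ref{algo:matrixF5} are \emph{not} row-equivalent to the full Macaulay matrices by construction, because the algorithm deliberately omits all rows $uf_k$ whose multiplier $u$ lies in $\langle\LM(\mathcal{G}_{k-1})\rangle$ --- that omission is the whole point of the $F_5$ criterion. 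Echelon reduction preserves row spaces, but row deletion does not, a priori. The equality you need is precisely the correctness statement for the algorithm (the paper's Corollary following the lemma), and the paper proves that corollary \emph{from} this lemma; invoking it inside the proof of the lemma, without further care, is circular.

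The repair is standard but must be made explicit: prove the criterion and the correctness statement (``the row span of $\mathcal{M}_{d,i}$ equals the row span of the full Macaulay matrix in degree $d$ of $(f_1,\ldots,f_i)$'') by a simultaneous induction on the pairs $(d,i)$, ordered so that $(d,i-1)$ and all pairs of lower degree precede $(d,i)$. Under the inductive hypothesis for $(d,i-1)$, your chain of equalities $gf_i\in\langle f_1,\ldots,f_{i-1}\rangle\cap k[\sgp_M^{(h)}]_d=\sum_k k[\sgp_M^{(h)}]_{d-d_k}f_k=\Span_k(Rows(\mathcal{M}_{d,i-1}))$ becomes legitimate, and the rest of your argument ($hf_i$ lies in the span of the products $uf_i$ with $u\prec m$, hence $mf_i=gf_i-hf_i$ lies in the required span) goes through verbatim; one then closes the loop by deducing correctness at $(d,i)$ from the criterion at $(d,i)$. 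Two minor remarks: the identity $\langle f_1,\ldots,f_{i-1}\rangle\cap k[\sgp_M^{(h)}]_d=\sum_k k[\sgp_M^{(h)}]_{d-d_k}f_k$ holds in any $\N$-graded algebra by taking homogeneous components, so degree-$1$ generation is not really what it rests on (that hypothesis matters for the incremental construction of the matrices); and the easy inclusion you use for $g$ --- that rows of $\mathcal{M}_{d-d_i,i-1}$ lie in the ideal --- is indeed unproblematic, since it only needs the trivial direction of the row-span comparison.
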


% \begin{algorithm}\label{algo:matrixF5}
% \SetKwInOut{Input}{Input}
% \SetKwInOut{Output}{Output}
% \Input{Homogeneous $f_1,\ldots,f_m\in k[\sgp_M^{(h)}]$ of resp. degrees $(d_1,\ldots,d_m)$, a graded monomial ordering $\prec$ on $k[\sgp_M^{(h)}]$,
% a maximal degree $D\in\N$}
% \Output{a $D$-Gr\"obner basis of $\langle f_1,\ldots,f_m\rangle$ w.r.t. $\prec$}
% \lFor{$i=1$ \KwTo $m$}{
% $\mathcal{G}_{i}:=\emptyset$
% }

%  \lFor{$d=1$ \KwTo $D$}{
% $\mathcal{M}_{d,0}:=\emptyset$,  $\widetilde{\mathcal{M}}_{d,0}:=\emptyset$
% }

% \For{$i=1$ \KwTo $m$}{
% \lCase{$d_i>d$\emph{:}}{
% $\mathcal{M}_{d,i}:=\widetilde{\mathcal{M}}_{d,i-1}$
% }

% \lCase{$d_i=d$\emph{:}}{
% $\mathcal{M}_{d,i}:=$ add new row $f_i$ to $\widetilde{\mathcal{M}}_{d,i-1}$ 
% }

% \Case{$d_i<d$\emph{:}}{
% add new row $X^{(s,d-d_i)} f_i$ to $\widetilde{\mathcal{M}}_{d,i-1}$ for all monomials $X^{(s,d-d_i)}\in k[\sgp_M^{(h)}]_{d-d_i}$ that are not in $\langle \LM(\mathcal{G}_{i-1})\rangle$.
% }
% Compute the row echelon form $\widetilde{\mathcal{M}}_{d,i}$ of $\mathcal{M}_{d,i}$\;

% Add to $\mathcal{G}_i$ all rows of $\widetilde{\mathcal{M}}_{d,i}$ not top reducible by \(\mathcal{G}_{i}\)\;

% }

% \Return{$\mathcal{G}_m$}
% \caption{{\tt sparse-MatrixF5}}
% \end{algorithm}

\begin{algorithm}\label{algo:matrixF5}
\SetKwInOut{Input}{Input}
\SetKwInOut{Output}{Output}
\Input{Homogeneous $f_1,\ldots,f_m\in k[\sgp_M^{(h)}]$ of resp. degrees $(d_1,\ldots,d_m)$, a graded monomial ordering $\prec$ on $k[\sgp_M^{(h)}]$,
a maximal degree~$D$}%\in\N$}
\Output{a $D$-Gr\"obner basis of $\langle f_1,\ldots,f_m\rangle$ w.r.t. $\prec$}
\lFor{$i=1$ \KwTo $m$}{
$\mathcal{G}_{i}:=\emptyset$
}

 \For{$d=1$ \KwTo $D$}{
$\mathcal{M}_{d,0}:=\emptyset$,  $\widetilde{\mathcal{M}}_{d,0}:=\emptyset$\;

\For{$i=1$ \KwTo $m$}{
\lCase{$d_i>d$\emph{:}}{
$\mathcal{M}_{d,i}:=\widetilde{\mathcal{M}}_{d,i-1}$
}

\lCase{$d_i=d$\emph{:}}{
$\mathcal{M}_{d,i}:=$add new row $f_i$ to $\widetilde{\mathcal{M}}_{d,i-1}$%\hspace*{-1pt}
}

\lCase{$d_i<d$\emph{:}}{
add new row $X^{(s,d-d_i)} f_i$ to $\widetilde{\mathcal{M}}_{d,i-1}$ for all monomials $X^{(s,d-d_i)}\in k[\sgp_M^{(h)}]_{d-d_i}$ that are not in $\langle \LM(\mathcal{G}_{i-1})\rangle$
}

Compute the row echelon form $\widetilde{\mathcal{M}}_{d,i}$ of $\mathcal{M}_{d,i}$\;

Add to $\mathcal{G}_i$ all rows of $\widetilde{\mathcal{M}}_{d,i}$ not top reducible 
by~\(\mathcal{G}_{i}\)\;

}
}
\Return{$\mathcal{G}_m$}
\caption{{\tt sparse-MatrixF5}}
\end{algorithm}
A direct consequence of this lemma is:
\begin{corollary}
Algorithm \ref{algo:matrixF5} is correct.
\end{corollary}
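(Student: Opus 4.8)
The plan is to show that the sparse Macaulay matrices built by the algorithm have the \emph{same row spaces} as the full (classical) Macaulay matrices of Def.-Prop.~\ref{defprop:dwit}, so that no leading-monomial information is lost when the $F_5$ criterion discards rows. Concretely, I would prove the invariant: for every degree $d\leq D$ and every index $i$, the row space of $\widetilde{\mathcal M}_{d,i}$ equals $\langle f_1,\ldots,f_i\rangle\cap k[\sgp_M^{(h)}]_d$. Granting this for $i=m$, the reduced echelon form $\widetilde{\mathcal M}_{d,m}$ has its set of pivot (leading) monomials coinciding exactly with the leading monomials occurring in $\langle f_1,\ldots,f_m\rangle\cap k[\sgp_M^{(h)}]_d$; here I use the elementary fact that for any finite-dimensional space of homogeneous polynomials the pivots of its reduced echelon form are precisely the leading monomials occurring in the space. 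Since the algorithm keeps in $\mathcal G_m$ exactly those rows whose leading monomial is not already divisible by a stored leading monomial, every leading monomial of $I$ in degree $\leq D$ is divisible by $\LM(g)$ for some $g\in\mathcal G_m$, which is the definition of a $D$-sGB; termination is immediate from the finiteness of the ranges of $d$, $i$, and of the monomials in each $k[\sgp_M^{(h)}]_d$.

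The invariant is proved by induction on the outer degree $d$ and then on the index $i$. The cases $d_i>d$ (nothing is added, and $f_i$ contributes nothing in degree $d$) and $d_i=d$ (the single row $f_i$ is added) are immediate. The delicate case is $d_i<d$, where the algorithm omits the rows $X^{(s,d-d_i)}f_i$ for which $X^{(s,d-d_i)}\in\langle\LM(\mathcal G_{i-1})\rangle$. Here I would first invoke the inductive hypothesis in lower degree: since $\mathcal G_{i-1}$ is already a $(d-d_i)$-sGB of $\langle f_1,\ldots,f_{i-1}\rangle$, the condition $X^{(s,d-d_i)}\in\langle\LM(\mathcal G_{i-1})\rangle$ forces $X^{(s,d-d_i)}$ to lie in $\LM(\langle f_1,\ldots,f_{i-1}\rangle)$, hence — again by the pivots-are-leading-monomials fact — to be the leading monomial of an actual row of $\widetilde{\mathcal M}_{d-d_i,i-1}$. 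This is precisely the hypothesis needed to apply the $F_5$ criterion (Lemma~\ref{lem:F5crit}), which places $X^{(s,d-d_i)}f_i$ in $\Span_k(Rows(\mathcal M_{d,i-1})\cup\{uf_i\mid u\prec X^{(s,d-d_i)},\ \deg u=d-d_i\})$.

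To close the argument I would add an innermost induction on the monomial ordering $\prec$, sweeping the candidate multipliers from smallest to largest. The term $Rows(\mathcal M_{d,i-1})$ already spans $\langle f_1,\ldots,f_{i-1}\rangle_d$ by the inductive hypothesis on $i$, while each smaller multiple $uf_i$ with $u\prec X^{(s,d-d_i)}$ is either a kept row or, being itself omitted, already in the span of the kept rows together with $\langle f_1,\ldots,f_{i-1}\rangle_d$ by this innermost induction. Hence every omitted row is redundant and the row space is unchanged; combined with the fact that $k[\sgp_M^{(h)}]$ is generated in degree $1$ — so that $\langle f_1,\ldots,f_i\rangle_d=\langle f_1,\ldots,f_{i-1}\rangle_d+k[\sgp_M^{(h)}]_{d-d_i}\cdot f_i$ is spanned by the full family of multiples of $f_i$ — this establishes the invariant for $(d,i)$.

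The step I expect to be the main obstacle is the bridge in the second paragraph: translating the algorithm's divisibility test ``$X^{(s,d-d_i)}\in\langle\LM(\mathcal G_{i-1})\rangle$'' into the exact hypothesis of Lemma~\ref{lem:F5crit}, namely that $X^{(s,d-d_i)}$ be the leading monomial of a row of $\widetilde{\mathcal M}_{d-d_i,i-1}$. This is where the inductive sGB property of $\mathcal G_{i-1}$ and the identification of echelon pivots with the leading monomials of the row space must both be deployed with care, since the criterion constrains genuine pivots rather than arbitrary elements of the leading-monomial ideal.
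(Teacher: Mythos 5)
Your proof is correct and follows essentially the same route as the paper's: an induction on $d$ and $i$ showing that the row span of $\mathcal{M}_{d,i}$ equals that of the full Macaulay matrix of $(f_1,\ldots,f_i)$ in degree $d$, with Lemma~\ref{lem:F5crit} justifying the omitted rows and Definition-Proposition~\ref{defprop:dwit} concluding. The details you supply beyond the paper's two-sentence sketch --- the bridge from the divisibility test to the hypothesis of the $F_5$ criterion via the pivots-equal-leading-monomials fact, and the innermost induction along $\prec$ --- are precisely the steps the paper compresses into ``a direct induction.''
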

\begin{proof}
With the notations of Algo.~\ref{algo:matrixF5} A direct induction on $d$ and $i$ with Lemma \ref{lem:F5crit} shows that the row span of $\mathcal{M}_{d,i}$ is equal to the row span of the Macaulay matrix in degree $d$ of $(f_1,\ldots,f_i)$. DefProp.~\ref{defprop:dwit} concludes the proof.
\end{proof}

In practice, the choice of the parameter $D$ in Algorithm \ref{algo:matrixF5} is driven by the explicit bounds on the witness degree that we shall derive in Section \ref{sec:complexity}.

\subsection{Sparse-FGLM algorithm}
\label{subsec:sparse-fglm}
The {\tt FGLM} algorithm and its variants might be seen as a tool to change
the representation of a $0$-dimensional ideal. It relies on the notion
of \emph{normal form} relative to an ideal $I$. A normal form relative to $I$ is a $k$-linear map
$\NF:k[S]\rightarrow k[S]$ whose kernel is ${\rm ker}(\NF)=I$. It sends every coset of $I$ to the same representative, allowing effective computations in the ring $k[S]/I$.
One important feature of a sparse Gr\"obner
basis is that it provides a normal form and an algorithm to compute
it by successive reductions of leading monomials.

Let $(p_1,\dots,p_r)$ be the Hilbert basis of a semigroup
$\sgp\subset\Z^n$. Given new indeterminates $H=\{H_1,\dots,H_r\}$, any
monomial in $k[\sgp]$ is the image of a monomial in $k[H]$ via the morphism
$\varphi:k[H_1,\dots,H_r]\rightarrow k[\sgp]$ defined by
$\varphi(H_i)=X^{p_i}$. Given an admissible monomial ordering $\prec_H$ on the ring $k[H_1,\dots,H_r]$,
an ideal $I\subset k[\sgp]$ and a normal form relative to $I$ (given for
instance by a sparse Gr\"obner basis of $I$),
Algorithm~\ref{algo:sparse-fglm} computes a Gr\"obner basis of
$\varphi^{-1}(I)$. Note that $\psi\left(Var(I)\cap
(\overline k^*)^n\right)=Var\left(\varphi^{-1}(I)\right)\cap (\overline
k^*)^r$, where $\psi: \overline k^n\rightarrow \overline k^r$ is the map $\mathbf x\mapsto (\mathbf x^{p_1},\ldots,\mathbf x^{p_r})$. Also, we would like to point out that 
Algorithm~\ref{algo:sparse-fglm} does not depend on the support of the
input sparse system, but only on the ambient semigroup $S_M$.

The main principle of Algorithm~\ref{algo:sparse-fglm} is similar to
the original {\tt FGLM} Algorithm~\cite{FGLM}: we consider the monomials in
$k[H_1,\ldots,H_r]$ in increasing order until we obtain sufficiently
many linear relations between their normal forms. The only difference
is that the computations of the normal forms are performed in
$k[\sgp]$ (using a previously computed sparse Gr\"obner basis) via the
morphism $\varphi$. For solving sparse systems, we choose
the \emph{lexicographical ordering} for $\prec_H$.

\begin{algorithm}\label{algo:sparse-fglm}
  \SetKwInOut{Input}{Input}
  \SetKwInOut{Output}{Output}
  \SetKwComment{Comment}{}{}
   \Input{
     \hspace*{-1mm}%
-a normal form\,$\NF\hspace{-3pt}:k[S]\hspace{-2pt}\rightarrow \hspace{-2pt}k[S]$\,of a $0$-dim ideal~$I$ \\
     %\hspace*{-3mm}
-a monomial ordering $\prec_H$ on $k[H_1,\dots,H_r]$\\
     %\hspace*{-3mm}
-a monomial map $\varphi: k[H_1,\dots,H_r]\rightarrow k[\sgp]$\\
  }
  \Output{A Gr\"obner basis in $k[H_1,\dots,H_r]$ w.r.t. $\prec_H$}
  $L:=[1]$\Comment*[l]{\small //list of monomials in \(k[H_1,\dots,H_r]\)}
  $E:=[\,]$\Comment*[l]{\small //staircase for the new ordering $\prec_H$}
  $V:=[\,]$\Comment*[l]{\small //$V=\NF(\varphi(S))$}
  $G:=[\,]$\Comment*[l]{\small //The Gr\"obner basis in $k[H_1,\dots,H_r]$}
  \While{$L\neq [\,]$}{
    $m:=L[1];$ and Remove $m$ from $L$\;
    $v:=\NF(\varphi(m))$\Comment*[r]{\textbf{ (1)}}
    $e:=\#E$ \;
    \eIf{$v\in \Span_{k}\left( V\right)$}{
      $\exists \, (\lambda_{i})\in k^e$ such that $
      v=\sum\limits_{i=1}^{e}\lambda_{i}\cdot V_{i}$\Comment*[r]{\textbf{ (2)}}
      $G:=G\cup \left[
        m-\sum\limits_{i=1}^{s}\lambda _{i}\cdot E_{i}\right]$\;
      Remove from $L$ the elements top-reducible by $G$.
    }{
      $E:=E\cup \lbrack m \rbrack$; \quad
      $V:={\rm RowEchelon}(V\cup \lbrack v \rbrack)$\Comment*[r]{\textbf{ (3)}}
      $L:=\text{Sort}(L\cup \left[ H_{i}\,m\mid
        i=1,\ldots ,r\right] ,\prec_H)$\; 
      Remove from $L$ duplicate elements\;
    }
  }
  Return $G$\;
  \caption{{\tt Sparse-FGLM}}
\end{algorithm}

\begin{theorem}
\label{th:FGLMcorrectness}
 Algorithm {\tt Sparse-FGLM} is correct: it computes the reduced GB of the
ideal $\varphi^{-1}(I)\subset k[H_1,\ldots, H_r]$ with respect to $\prec_H$.
\end{theorem}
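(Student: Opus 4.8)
The plan is to adapt the classical correctness proof of {\tt FGLM}~\cite{FGLM}, isolating the single new ingredient: the normal form of an element of $k[H_1,\dots,H_r]$ is computed in $k[\sgp]/I$ through the monomial map $\varphi$ rather than directly. The starting observation is that the composite $\NF\circ\varphi\colon k[H_1,\dots,H_r]\to k[\sgp]$ is $k$-linear and its kernel is exactly $\varphi^{-1}(I)$, since $\NF(\varphi(p))=0$ iff $\varphi(p)\in I$ iff $p\in\varphi^{-1}(I)$. Consequently, for monomials $m_1,\dots,m_t$ and scalars $\lambda_j$, one has $\sum_j\lambda_j m_j\in\varphi^{-1}(I)$ if and only if $\sum_j\lambda_j\NF(\varphi(m_j))=0$ in $k[\sgp]$. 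Thus every linear dependency detected at step {\bf (2)} corresponds precisely to an element of $\varphi^{-1}(I)$, and the polynomial $m-\sum_i\lambda_i E_i$ added to $G$ genuinely lies in $\varphi^{-1}(I)$. Moreover, the induced map $k[H_1,\dots,H_r]/\varphi^{-1}(I)\hookrightarrow k[\sgp]/I$ is injective, and since $I$ is $0$-dimensional, $k[\sgp]/I$ is finite-dimensional; hence $\varphi^{-1}(I)$ is $0$-dimensional as well, which will drive termination.

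Next I would state and maintain the loop invariants. At each iteration: (i) the rows of $V$ are the normal forms $\NF(\varphi(e))$ for $e\in E$ and are $k$-linearly independent; (ii) $E$ is an order ideal (closed under division) consisting of standard monomials for $\varphi^{-1}(I)$, \emph{i.e.} monomials not in $\LM(\varphi^{-1}(I))$; (iii) every $g\in G$ belongs to $\varphi^{-1}(I)$, has leading monomial the $\prec_H$-largest monomial $m$ at which it was created, and has all its trailing monomials in $E$. Invariant (i) is preserved because a monomial is appended to $E$ (branch {\bf (3)}) exactly when its normal form is independent from the current rows of $V$. Invariant (ii) follows from (i): no nontrivial combination of elements of $E$ lies in $\varphi^{-1}(I)$, so no $e\in E$ is a leading monomial; closure under division holds because a monomial is enqueued into $L$ only as a multiple $H_i\cdot m'$ of an already-processed monomial, and $\prec_H$ being a monomial order forces divisors to be processed before their multiples. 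Invariant (iii) is immediate from the first paragraph together with the fact that the tail monomials are the elements $E_i\in E$.

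From these invariants I would read off correctness. Termination: by (i) the cardinality of $E$ is at most $\dim_k(k[\sgp]/I)<\infty$, and every iteration either enlarges $E$ or adds one element to $G$ while pruning $L$, so the loop halts. The staircase $E$ is exactly the set of standard monomials of $\varphi^{-1}(I)$, so $\{\LM(g)\mid g\in G\}$ consists of the minimal monomials outside $E$, which are precisely the minimal generators of the monomial ideal $\LM(\varphi^{-1}(I))$; hence $G$ is a Gr\"obner basis. Finally $G$ is reduced: each $g=m-\sum_i\lambda_i E_i$ is monic with $\LM(g)=m$; its tail monomials lie in $E$ and so are divisible by no leading monomial; and $m$ itself is not top-reducible when processed (otherwise it would have been removed from $L$), while any leading monomial created later is $\succ_H m$ and therefore cannot divide $m$.

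The main obstacle is invariant (ii), specifically the claim that $E$ captures every standard monomial and that the pruning step ``remove from $L$ the elements top-reducible by $G$'' discards nothing essential. The heart of the matter is a \emph{staircase-completeness} argument, shared with classical {\tt FGLM}: one must show that at each stage $L$ contains the entire border of the current staircase $E$ (all products $H_i\cdot e$ that are not already top-reducible), so that processing monomials in increasing $\prec_H$ order visits every standard monomial and every minimal generator of $\LM(\varphi^{-1}(I))$ before any monomial strictly above it. This relies crucially on $\prec_H$ being admissible (divisors precede multiples) and on $E$ being an order ideal; once established, it guarantees both that $G$ generates $\LM(\varphi^{-1}(I))$ and that no standard monomial is ever skipped.
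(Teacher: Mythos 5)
Your proposal is correct and takes essentially the same route as the paper: both adapt the classical {\tt FGLM} correctness argument, resting on the observation that $\NF\circ\varphi$ is $k$-linear with kernel exactly $\varphi^{-1}(I)$, so that dependencies detected at step \textbf{(2)} correspond precisely to elements of $\varphi^{-1}(I)$. The staircase-completeness claim you isolate as the ``main obstacle'' is exactly what the paper asserts parenthetically (``otherwise the linear relation would have been detected by the algorithm''), and both treatments leave that claim, as well as the reducedness of $G$, at the same level of detail.
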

\begin{proof}
Let $G=(g_1,\ldots,g_\mu)$ be the output of Algo.~\ref{algo:sparse-fglm}. Set $m_i=\LM(g_i)$. First, we prove that
$G\subset\varphi^{-1}(I)$. Notice that each $g_i$ is of the form
$m_i-q$, where $\varphi(q)=\NF(\varphi(m_i))$. Consequently,
$\NF(\varphi(g_i))=0$ and hence $g_i\in \varphi^{-1}(I)$. Next, let
$h\in k[H]$ be a polynomial such that
$\LM(h)\notin\langle\LM(G)\rangle$. Up to reducing its nonleading
monomials by $G$, we can assume w.l.o.g. that all its monomials do not
belong to $\langle \LM(G)\rangle$. Therefore, the normal forms of the
images by $\varphi$ of all the monomials in the support of $h$ are
linearly independent in $k[S]/I$ (otherwise the linear relation would
have been detected by Algo.~\ref{algo:sparse-fglm}), which means that
$\NF(\varphi(h))\neq 0$ and hence $h\notin \varphi^{-1}(I)$, which
proves that $G$ is a Gr\"obner basis of
$\varphi^{-1}(I)$. The proof that $G$ is reduced is similar.
\end{proof}

\section{Complexity}\label{sec:complexity}
This section is devoted to the complexity of Algorithms \ref{algo:matrixF5} and \ref{algo:sparse-fglm} when the input is a homogeneous regular sequence. In the case of polytopal algebras, the complexity bounds of Theorems \ref{theo:complF5} and \ref{theo:complFGLM} depend mainly on the intrinsic combinatorial properties of the defining polytope.

\smallskip

{\bf Complexity model.} All the complexity bounds count
the number of arithmetic operations $\{+,\times,-,\div\}$ in
$k$; each of them is counted with unit cost. It is not our goal to
take into account operations in the semigroup $\sgp$.

\smallskip

The first goal is to bound $\dwit$ (see DefProp.~\ref{defprop:dwit}) via the Hilbert series of $k[\sgp]/I$. For regular sequences, this Hilbert series can be computed by the following classical formula:
\begin{proposition}\label{prop:HSsuitereg}
Let $\polytope$ be a normal lattice polytope, $f_1,\ldots, f_p\in k[\polytope]$ be a homogeneous regular sequence of homogeneous polynomials of respective degrees $(d_1,\ldots, d_p)$ and $I=\langle f_1,\ldots, f_p\rangle\subset k[\polytope]$. Then 
\[\HS_{k[\polytope]/I}(t)=\HS_{\polytope}(t)\cdot\prod_{i=1}^p (1-t^{d_i}).\]
\end{proposition}
\begin{proof}
See \emph{e.g.} \cite[Exercise 21.17b]{eisenbud1995commutative}.
\end{proof}
The next lemma gives an explicit bound for the witness degree of regular sequences in a normal polytopal algebra:
\begin{lemma}\label{lem:sparsedreg}
  Let $\polytope\subset\R^n$ be a normal lattice polytope and $f_1,\ldots, f_n$ be a homogeneous regular sequence in $k[\polytope]$ of degrees $(d_1,\ldots, d_n)$.
Then any $\left[\reg(k[\polytope])+1+\sum_{j=1}^n(d_j-1)\right]$-sGB  of the ideal $I=\langle f_1,\ldots,f_n\rangle$ is a sGB of $I$. In other words $\dwit\leq \reg(k[\polytope])+1+\sum_{j=1}^n(d_j-1)$.
\end{lemma}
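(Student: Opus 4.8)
```latex
\textbf{Proof plan.} The statement to establish is that the witness degree $\dwit$ is bounded above by $\reg(k[\polytope])+1+\sum_{j=1}^n(d_j-1)$. The guiding principle is the classical fact that for a regular (hence zero-dimensional quotient) sequence, a $D$-sGB becomes a full sGB as soon as $D$ exceeds the largest degree in which the quotient algebra $k[\polytope]/I$ is nonzero. So the plan is to reduce the problem to a purely numerical statement about the Hilbert series $\HS_{k[\polytope]/I}$, and then read off the top nonvanishing degree from the formula in Proposition~\ref{prop:HSsuitereg}.

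First I would argue that once the degree $D$ is large enough that $(k[\polytope]/I)_d = 0$ for all $d > D$, every homogeneous element of $I$ of degree $d\le D$ already has its leading monomial caught by $G_0\cup\cdots\cup G_D$, and for $d>D$ the whole graded piece $k[\polytope]_d$ lies in $I$, so leading monomials in those degrees are multiples of lower-degree leading monomials already present (using that $k[\sgp_M^{(h)}]$ is generated in degree $1$, so every degree-$d$ monomial is a degree-$1$ multiple of a degree-$(d-1)$ monomial, via the Proposition preceding the Macaulay-matrix definition). This shows $\dwit$ is at most the top degree in which $k[\polytope]/I$ is nonzero plus one; I would keep careful track of whether the bound wants the last nonvanishing degree or one more than it, since the $+1$ in the statement comes precisely from needing the leading monomials of the next degree to already be reducible.

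The second and main step is the numerical computation. By Proposition~\ref{prop:HSsuitereg},
\[
\HS_{k[\polytope]/I}(t)=\HS_\polytope(t)\cdot\prod_{j=1}^n(1-t^{d_j}),
\]
and by Proposition~\ref{prop:HSpolytope} we may write $\HS_\polytope(t)=Q(t)/(1-t)^{n+1}$ with $\deg(Q)=\reg(k[\polytope])$ by Proposition~\ref{prop:regpolytope}. Each factor $(1-t^{d_j})=(1-t)(1+t+\cdots+t^{d_j-1})$ contributes one factor of $(1-t)$ and a polynomial of degree $d_j-1$. Since there are exactly $n$ such factors, all $n+1$ powers of $(1-t)$ in the denominator are cancelled except one, leaving
\[
\HS_{k[\polytope]/I}(t)=\frac{Q(t)\prod_{j=1}^n(1+t+\cdots+t^{d_j-1})}{1-t}.
\]
Here the degree of the numerator is $\reg(k[\polytope])+\sum_{j=1}^n(d_j-1)$. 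The key point is that since $I$ is a regular sequence of length $n=\dim k[\polytope]-1$, the quotient is one-dimensional as a graded module but in fact the numerator divided by the remaining $(1-t)$ must terminate: regularity forces $\HS_{k[\polytope]/I}$ to be a polynomial (the quotient is finite-dimensional over $k$ once we account for the grading correctly), so the remaining $(1-t)$ in the denominator divides the numerator, and the top nonvanishing degree of $k[\polytope]/I$ equals $\reg(k[\polytope])+\sum_{j=1}^n(d_j-1)$.

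\textbf{Main obstacle.} The delicate point I expect to spend the most care on is the interplay between the two statements ``$\dwit\le(\text{top degree})+1$'' and the exact degree of the numerator: I must verify that the numerator polynomial $Q(t)\prod_j(1+\cdots+t^{d_j-1})$ is genuinely divisible by the surviving $(1-t)$ and does not introduce an off-by-one error, and that $Q$ and the product have no cancellation at the top degree that would lower the bound (the inequality direction saves me here, so a possible cancellation only helps). Combining the reduction step with the top nonvanishing degree $\reg(k[\polytope])+\sum_{j=1}^n(d_j-1)$ then yields $\dwit\le\reg(k[\polytope])+1+\sum_{j=1}^n(d_j-1)$, as claimed.
```
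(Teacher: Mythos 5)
There is a genuine error at the heart of your argument: the quotient $k[\polytope]/I$ is \emph{not} finite-dimensional over $k$, and its Hilbert series is \emph{not} a polynomial. The ring $k[\polytope]=k[\sgp^{(h)}_{\polytope\cap\Z^n}]$ has Krull dimension $n+1$ (the semigroup lives in $\Z^{n+1}$ and $\polytope$ is full-dimensional), so a regular sequence of length $n$ cuts it down to Krull dimension $1$, not $0$. Concretely, the surviving factor $(1-t)$ does \emph{not} divide the numerator $Q(t)\prod_{j=1}^n(1+t+\cdots+t^{d_j-1})$: evaluating at $t=1$ gives $Q(1)\prod_j d_j=\vol(\polytope)\prod_j d_j>0$ (Stanley's non-negativity guarantees $Q(1)>0$). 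This is as it should be: the homogenized system defines a one-dimensional cone, and the Hilbert function of $k[\polytope]/I$ stabilizes at the B\'ezout-type number $\vol(\polytope)\prod_j d_j$ rather than vanishing. So the quantity you want to read off --- ``the top nonvanishing degree of $k[\polytope]/I$'' --- does not exist, and the first half of your reduction (``for $d>D$ the whole graded piece $k[\polytope]_d$ lies in $I$'') never happens. The obstacle you flagged as delicate (divisibility of the numerator by $1-t$) is exactly where the proof breaks, but your proposed resolution (``regularity forces the quotient to be finite-dimensional'') is false.

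The paper's proof handles this correctly by writing the partial fraction decomposition
\[
\HS_{k[\polytope]/I}(t)=\frac{Q(1)\prod_{j=1}^n d_j}{1-t}+K(t),\qquad \deg K=\reg(k[\polytope])-1+\sum_{j=1}^n(d_j-1),
\]
so that the Hilbert function of $k[\polytope]/\LM(I)$ is \emph{constant} (not zero) for $d\geq\deg(K)+1$. One then needs a genuinely geometric step that your plan has no substitute for: using the nesting $(d-1)\polytope\subset d\polytope$, eventual constancy of the Hilbert function forces the staircase to stabilize, in the sense that beyond degree $\deg(K)+1$ no monomial outside $\LM(I)$ dehomogenizes to a \emph{new} lattice point (one not already in $(d-1)\polytope$); hence minimal generators of $\LM(I)$, and therefore minimal homogeneous sparse Gr\"obner bases of $I$, live in degree at most $\deg(K)+2=\reg(k[\polytope])+1+\sum_{j=1}^n(d_j-1)$. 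It is a coincidence of arithmetic that your final number agrees with the paper's: your ``numerator degree'' $\reg(k[\polytope])+\sum_j(d_j-1)$ happens to equal $\deg(K)+1$, the degree at which the Hilbert function becomes constant, so adding $1$ lands on the correct bound --- but the object it is supposed to measure in your argument does not exist, so the proof as proposed cannot be repaired without switching to the constancy-plus-stabilization argument.
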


\begin{proof}
By Prop.~\ref{prop:HSsuitereg} and with the notations of Prop.~\ref{prop:HSpolytope}, the
Hilbert series of $k[\polytope]/I$ is equal to
$$\begin{array}{rcl}
\HS_{\polytope}(t)\prod_{i=1}^n (1-t^{d_i})&=&\displaystyle\frac{Q(t)\prod_{i=1}^n
  (1-t^{d_i})}{(1-t)^{n+1}}\\
&=&\displaystyle\frac{Q(1)\prod_{i=1}^n d_i}{1-t}+K(t)
\end{array}
$$ where $K(t)\in\Z[t]$ is a univariate polynomial with
$\deg(K(t))=\reg(k[\polytope])-1+\sum_{i=1}^p(d_i-1)$.  Now, notice that
the Hilbert series of $k[\polytope]/I$ is equal to that of
$k[\polytope]/\LM(I)$. Therefore $\HP_{k[\polytope]/\LM(I)}(d)$ is
constant for $d\geq \deg(K(t))+1$. Since $\ell<\ell'$ implies $\ell\polytope\subset \ell'\polytope$, we obtain
$$\begin{array}{c}\max\{d\in\N\mid \exists X^{(s,d)}\notin \LM(I)\text{ s.t. }\begin{array}[t]{@{}l@{}}
s\in (d\cdot\polytope)\cap\Z^n \text{ and}\\
s\notin ((d-1)\cdot\polytope)\cap\Z^n
\}
  \end{array}\\=\deg(K(t))+1.
\end{array}
$$

Consequently,
minimal generators of $\LM(I)$ and hence minimal homogeneous Gr\"obner bases of
$I$ have degree at most $\deg(K(t))+2=\reg(k[\polytope])+1+\sum_{j=1}^n(d_j-1)$. 
\end{proof}

Now that we have an upper bound for the witness degree, we can estimate the cost of computing a sGB by reducing the Macaulay matrix in degree $\dwit$ (although {\tt sparse-MatrixF5} is a much faster way to compute a sGB in practice, it is not easy to bound precisely its complexity). Note that $\reg(k[\polytope])$ in the following theorem can be deduced from Prop.~\ref{prop:regpolytope}.

\begin{theorem}\label{theo:complF5}
With the same notations as in Lemma~\ref{lem:sparsedreg}, the
complexity of computing a sGB of $\chi_{\polytope\cap\Z^n}(\langle
f_1,\ldots f_n\rangle)\subset k[\sgp_{\polytope\cap\Z^n}]$
by reducing the Macaulay matrix in degree $\dwit$ is bounded above by
$$O\left(n 
\HP_\polytope(\dwit)^\omega\right),$$ where
$\dwit\leq\reg(k[\polytope])+1+\sum_{j=1}^n(d_j-1)$ and $\omega$ is a
feasible exponent for the matrix multiplication ($\omega<2.373$ with \cite{williams2012multiplying}).
\end{theorem}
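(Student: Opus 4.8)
The plan is to reduce the whole statement to a single dense linear-algebra computation and then bound its cost. By DefProp.~\ref{defprop:dwit}, if we compute the reduced row echelon form of the Macaulay matrix $\mathcal M_{\dwit}$ in degree $\dwit$, its rows yield the set $G_{\dwit}$, and by the very definition of the witness degree $\chi_{\polytope\cap\Z^n}(G_{\dwit})$ is a sGB of $\chi_{\polytope\cap\Z^n}(\langle f_1,\ldots,f_n\rangle)$; the accompanying bound $\dwit\leq\reg(k[\polytope])+1+\sum_{j=1}^n(d_j-1)$ is exactly Lemma~\ref{lem:sparsedreg}. So correctness is immediate, and the entire cost is that of assembling and reducing $\mathcal M_{\dwit}$.

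First I would pin down the dimensions of $\mathcal M_{\dwit}$. Its columns are indexed by the degree-$\dwit$ monomials of $k[\sgp_{\polytope\cap\Z^n}^{(h)}]=k[\polytope]$, whose number is $\HF_{k[\polytope]}(\dwit)=\HP_\polytope(\dwit)$ by Prop.~\ref{prop:HSpolytope}. Its rows are the shifted generators $X^{(s,\dwit-d_i)}f_i$, so there are $\sum_{i=1}^n\max(\HF_{k[\polytope]}(\dwit-d_i),0)$ of them. Since $\mathbf 0$ is a vertex of $\polytope$, the dilates are nested, $(\dwit-d_i)\cdot\polytope\subseteq\dwit\cdot\polytope$, so $d\mapsto\HP_\polytope(d)=\#(d\cdot\polytope)$ is nondecreasing and each summand is at most $\HP_\polytope(\dwit)$. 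Hence $\mathcal M_{\dwit}$ has exactly $\HP_\polytope(\dwit)$ columns and at most $n\,\HP_\polytope(\dwit)$ rows.

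Next I would invoke fast linear algebra. Writing $c=\HP_\polytope(\dwit)$ for the number of columns and $N\leq nc$ for the number of rows, the reduced row echelon form of an $N\times c$ matrix over $k$ can be computed in $O(N c^{\omega-1})$ field operations by reducing the elimination to rectangular matrix multiplication (block Gaussian elimination); substituting $N\leq nc$ gives $O(nc^{\omega})=O\!\left(n\,\HP_\polytope(\dwit)^{\omega}\right)$. Building $\mathcal M_{\dwit}$ by reading off the coefficients of each shifted generator costs only $O(Nc)=O\!\left(n\,\HP_\polytope(\dwit)^2\right)$, which is dominated since $\omega\geq 2$, yielding the announced bound.

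The hard part is entirely the linear-algebra estimate: one must appeal to a fast row-echelon routine achieving $O(Nc^{\omega-1})$ for a matrix with possibly many more rows than columns, rather than the naive $O(Nc^2)$ of classical Gaussian elimination. The only genuinely geometric ingredient is the monotonicity of the Ehrhart function used to bound the row count, and this rests on $\mathbf 0$ being a vertex so that the dilates are nested; everything else follows formally from the Hilbert-function identities of Section~\ref{sec:background} and the characterization of $\dwit$ in DefProp.~\ref{defprop:dwit}.
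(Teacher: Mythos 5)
Your proposal is correct and follows essentially the same route as the paper's proof: bound the Macaulay matrix at degree $\dwit$ by $\HP_\polytope(\dwit)$ columns and at most $n\,\HP_\polytope(\dwit)$ rows, apply a fast row-echelon routine costing $O\left(n\,\HP_\polytope(\dwit)^\omega\right)$ (the paper cites Storjohann where you sketch the block-elimination argument), and conclude via Lemma~\ref{lem:sparsedreg} together with the fact that dehomogenizing the degree-$\dwit$ echelon basis yields a sGB (you invoke DefProp.~\ref{defprop:dwit} directly, the paper routes through Prop.~\ref{prop:deshom}; these are equivalent). Your explicit justification of the monotonicity of $\HP_\polytope$ via nested dilates is a detail the paper leaves implicit, but the argument is the same.
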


\begin{proof}
Let $I\subset k[\polytope]$ be the ideal generated by $(
f_1,\ldots,f_n)$.  The number of columns and
rows of the Macaulay matrix in degree $d$ are respectively
$$\begin{array}{rcl}
{\rm nb_{cols}}&=&\HP_\polytope(d),\\
{\rm nb_{rows}}&=&\sum_{i=1}^{n}\HP_\polytope(d-\deg(f_i))\leq n\HP_\polytope(d).
\end{array}$$
Consequently, the row echelon form of such a matrix can be computed within $O(n\HP_\polytope(d)^\omega)$ field operations \cite[Prop. 2.11]{storjohann2000algorithms}. By Proposition~\ref{prop:deshom} and Lemma \ref{lem:sparsedreg}, for $d=\dwit\leq \reg(k[\polytope])+1+\sum_{j=1}^n(d_j-1)$, this provides a sGB of $\chi_{\polytope\cap\Z^n}(I)$.
\end{proof}

We now investigate the complexity of Algorithm~\ref{algo:sparse-fglm}
when $I\subset k[\sgp]$ is a $0$-dim. ideal, and use the same
notations as in Section~\ref{subsec:sparse-fglm}.  Notice that the map
$\varphi$ induces an isomorphism $\psi: k[H]/\varphi^{-1}(I)
\rightarrow k[\sgp]/I$ and therefore Algorithm~\ref{algo:sparse-fglm}
may be seen as a way to change the representation of $k[\sgp]/I$.

\begin{theorem}\label{theo:complFGLM}
Set $\degreeI=\dim_k(k[\sgp]/I)$
 and let $r$ be the cardinality of the Hilbert basis of $\sgp$.
If the input normal form is computed via a reduced sGB of $I\subset k[\sgp]$ (for some monomial ordering), $\sgp$ is a simplicial affine semigroup (see Def.~\ref{def:simplicial}) and $k[\sgp]$ is Cohen-Macaulay, then Algorithm~\ref{algo:sparse-fglm} computes the Gr\"obner basis $G$ with at most $O(r\cdot \degreeI^3)$ operations in $k$.
\end{theorem}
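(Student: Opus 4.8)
The plan is to transpose the classical complexity analysis of \texttt{FGLM}~\cite{FGLM} to this setting, splitting the bound into three independent estimates: the number of iterations of the \texttt{while} loop, the cost of a single evaluation of $\NF\circ\varphi$ in step~\textbf{(1)}, and the total cost of the linear algebra in steps~\textbf{(2)}--\textbf{(3)}. Throughout, I would represent every element of $k[\sgp]/I$ by its coordinate vector in $k^{\degreeI}$ relative to the basis of standard monomials (the staircase) furnished by the reduced sGB, so that $\NF$ returns such vectors. Recall that the complexity model counts only operations in $k$; hence all manipulations on the monomials of $k[H_1,\dots,H_r]$ and on the lists $L,E,G$ (divisibility tests, sorting, duplicate removal, enqueuing the products $H_i\,m$) are semigroup bookkeeping and are free of charge.

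First I would bound the number of iterations. Each time the \emph{else} branch is taken, a monomial $m$ is appended to $E$ and its image $v=\NF(\varphi(m))$ is inserted as a new pivot of $V$; thus the vectors $\{\NF(\varphi(m))\mid m\in E\}$ remain $k$-linearly independent in $k[\sgp]/I$, whence $\#E=\#V\le\degreeI$. Since $L$ is initialised to $[1]$ and only ever grows by appending products $H_i\,m$ with $m\in E$, every monomial ever enqueued is either $1$ or of this form, so at most $1+r\,\degreeI$ monomials are processed. Consequently the loop runs $O(r\,\degreeI)$ times and step~\textbf{(1)} is executed $O(r\,\degreeI)$ times. For the linear algebra, in each iteration the test ``$v\in\Span_k(V)$'', the extraction of the coefficients $(\lambda_i)$, and the update ${\rm RowEchelon}(V\cup[v])$ all amount to reducing the $\degreeI$-vector $v$ against the at most $\degreeI$ pivot rows and back-substituting, which with the usual incremental maintenance costs $O(\degreeI^2)$; forming $m-\sum\lambda_i E_i$ costs $O(\degreeI)$. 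Summed over the $O(r\,\degreeI)$ iterations this contributes $O(r\,\degreeI^3)$.

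It then remains to show that each evaluation of $\NF(\varphi(m))$ in step~\textbf{(1)} costs $O(\degreeI^2)$, which would likewise give $O(r\,\degreeI^3)$ for all normal-form computations and hence the claimed total. The natural device is the incremental trick of \texttt{FGLM}: writing $m=H_i\,m'$ with $m'$ already processed, one has $\NF(\varphi(m))=T_i\cdot\NF(\varphi(m'))$, where $T_i\in k^{\degreeI\times\degreeI}$ is the matrix of multiplication by $X^{p_i}$ on $k[\sgp]/I$ in the standard-monomial basis; this matrix-vector product costs $O(\degreeI^2)$. The task is thus reduced to building the $r$ matrices $T_i$ within budget $O(r\,\degreeI^3)$, that is, to computing the $r\,\degreeI$ monomial normal forms $\NF(X^{p_i+s_b})$ (for $p_i\in\hilbert(\sgp)$ and $s_b$ ranging over the staircase) each in $O(\degreeI^2)$ field operations from the reduced sGB.

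This last point is the crux, and it is exactly where the hypotheses on $\sgp$ enter; I expect it to be the main obstacle. In a general semigroup algebra the monomial $X^{p_i+s_b}$ may lie far outside the staircase, so its reduction chain can be long and the $O(\degreeI^2)$-per-monomial bound can fail. The plan is to exploit simpliciality and Cohen-Macaulayness: choosing nonzero $u_1,\dots,u_n\in\sgp$ on the $n$ extreme rays of the simplicial cone $\R_+\sgp$ gives a homogeneous system of parameters $X^{u_1},\dots,X^{u_n}$, and the Cohen-Macaulay assumption makes $k[\sgp]$ a \emph{finite free} module over the polynomial subalgebra $R=k[X^{u_1},\dots,X^{u_n}]$; every $s\in\sgp$ then decomposes uniquely as $w+\sum_j a_j u_j$ with $w$ in a finite set of module generators and $a_j\in\N$. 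This $R$-module structure lets one organise the $r\,\degreeI$ reductions in increasing monomial order, reusing previously computed normal forms exactly as in classical border-basis reduction, so that each of the $O(r\,\degreeI)$ distinct monomials is reduced only once through $O(\degreeI)$ steps of cost $O(\degreeI)$. Establishing this $O(\degreeI^2)$-per-monomial bound rigorously is the delicate step; once it is in place, combining the three estimates yields the announced $O(r\cdot\degreeI^3)$ bound.
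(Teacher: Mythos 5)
Your high-level decomposition is exactly the paper's: count $O(r\,\degreeI)$ iterations, charge $O(\degreeI^2)$ per iteration for the linear algebra in steps \textbf{(2)}--\textbf{(3)}, realize step \textbf{(1)} as a matrix-vector product with precomputed multiplication matrices $T_i$, and reduce everything to building those $r$ matrices within $O(r\,\degreeI^3)$. You also correctly locate the crux --- the matrix construction is the only place where the simplicial and Cohen--Macaulay hypotheses can possibly enter. But that crux is precisely what your proposal does not prove, as you acknowledge (``establishing this $O(\degreeI^2)$-per-monomial bound rigorously is the delicate step''). Since every other ingredient is routine bookkeeping inherited from \cite{FGLM}, the missing step is the entire mathematical content of the theorem, so this is a genuine gap rather than an omitted routine verification.

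The paper closes it with a concrete combinatorial fact: by \cite[Thm.~1.1]{rosales1998cohen}, when $\sgp$ is simplicial and $k[\sgp]$ is Cohen--Macaulay, for any two \emph{distinct} $p_i,p_j\in\hilbert(\sgp)$ and any $s\in\sgp$, if $s-p_i\in\sgp$ and $s-p_j\in\sgp$ then $s-p_i-p_j\in\sgp$. This is the semigroup analogue of the trivial fact in $\N^n$ that two distinct variables dividing a monomial implies their product divides it, and it is exactly what makes the induction of \cite[Prop.~2.1]{FGLM} carry over: if a border monomial $X^{p_i+s_b}$ lies in $\LM(I)$ but is not a leading monomial of the reduced sGB, then some $\LM(g)$ divides it strictly, giving a $p_j$ with $p_i+s_b-p_j\in\sgp$; the quoted property (applied with $s=p_i+s_b$, using $p_j\neq p_i$) then yields $s_c:=s_b-p_j\in\sgp$, and $s_c$ is in the staircase, so $X^{p_i+s_b}=X^{p_j}\cdot X^{p_i+s_c}$ with $X^{p_i+s_c}$ a smaller, already-processed border monomial --- whence a single $O(\degreeI^2)$ matrix-vector product suffices. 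Your substitute route (freeness of $k[\sgp]$ over $k[X^{u_1},\ldots,X^{u_n}]$ with the $u_j$ on the extreme rays) is a true consequence of the hypotheses, and is in fact equivalent to Cohen--Macaulayness in the simplicial case, but you never extract from it the factorization statement above (or any other mechanism guaranteeing that each border monomial reduces through previously computed normal forms); asserting that the reductions can be organised ``exactly as in classical border-basis reduction'' begs the question, because that organisation is precisely what fails in a general semigroup algebra. To complete your proof you would either invoke the Rosales--Garc\'ia-S\'anchez characterization directly, as the paper does, or derive the divisibility property from your free-module decomposition --- neither of which is in your write-up.
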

\begin{proof}
Once the $r$ matrices of size $\delta\times\delta$ representing the multiplications by $p_i$ in the canonical monomial basis of $k[S]/I$ are known, Step {\bf{(1)}} in Algorithm~\ref{algo:sparse-fglm} can be achieved in $O(\degreeI^2)$ as in the classical {\tt FGLM} Algorithm~\cite{FGLM}. Steps {\bf{(2)}} and {\bf{(3)}} are done by linear algebra as in~\cite{FGLM}, which leads to a total complexity of $O(r\cdot \degreeI^3)$ since the same analysis holds. It remains to prove that the multiplication matrices can be constructed in $O(r\cdot \degreeI^3)$ operations (this is a consequence of~\cite[Prop.~2.1]{FGLM} in the classical case).  Since $k[\sgp]$ is Cohen-Macaulay and $\sgp$ is simplicial,  we obtain by \cite[Thm.~1.1]{rosales1998cohen} that for any two distinct $p_i,p_j\in \hilbert(\sgp)$ and for any $s\in\sgp$, if $s-p_i$ and $s-p_j$ are in $\sgp$ then $s-p_i-p_j\in\sgp$. With this extra property, the proof of \cite[Prop.~2.1]{FGLM} extends to semigroup algebras. 
\end{proof}

If the input system is a regular sequence of Laurent polynomials, then
$\delta$ can be bounded by the mixed volume of their Newton polytopes
by Kushnirenko-Bernstein's Theorem~\cite{bernshtein1975number}.

\section{Dense, multi-homogeneous and overdetermined systems}\label{sec:applis}
In this section, we specialize Theorems \ref{theo:complF5} and \ref{theo:complFGLM} to several semigroups to obtain new results on the complexity of solving inhomogeneous systems with classical GB algorithms ($\polytope$ is the standard simplex), multi-homogeneous systems ($\polytope$ is a product of simplices) and we state a variant of Fr\"oberg's conjecture for overdetermined sparse systems.

{\bf Inhomogeneous dense systems.}
If $\polytope=\Delta_n$ is the standard simplex in $\R^n$, then
computations of a sparse Gr\"obner basis in the cone over $\Delta_n$
correspond to classical Gr\"obner bases computations using the
so-called ``sugar strategy'' introduced in
\cite{giovini1991one}. The following corollary shows that specializing Theorems \ref{theo:complF5} and \ref{theo:complFGLM} with $\polytope=\Delta_n$ recovers the usual complexity estimates for Gr\"obner bases computations
\begin{corollary}
  Let $f_1,\ldots, f_n$ be a regular sequence of
  polynomials of respective degrees $(d_1,\ldots, d_n)$ in $k[\Delta_n]$. Then the complexity of computing a classical Gr\"obner basis
  of $\langle f_1,\ldots, f_n\rangle$ with respect to a graded
  monomial ordering is bounded by 
$$O\left(n\binom{n+\dwit}{n}^\omega\right),$$
where $\dwit\leq 1+\sum_{i=1}^n(d_i-1)$.
\end{corollary}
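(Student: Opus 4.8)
The plan is to specialize Theorems~\ref{theo:complF5} and~\ref{theo:complFGLM} to the case $\polytope=\Delta_n$, the standard simplex in $\R^n$, and verify that all hypotheses are met and that the combinatorial invariants reduce to the claimed quantities. First I would check that $\Delta_n$ is a normal lattice polytope of dimension $n$ with a vertex at $\mathbf 0$ (indeed $\mathbf 0$ is one of its defining vertices), so that Lemma~\ref{lem:sparsedreg} and Theorem~\ref{theo:complF5} apply. The key observation is that the semigroup algebra $k[\Delta_n]$ is isomorphic to the classical polynomial ring $k[X_0,\ldots,X_n]$ with the usual grading: the lattice points $(\alpha,1)$ with $\alpha\in\Delta_n\cap\Z^n$ are exactly $(\mathbf 0,1)$ together with $(\mathbf e_i,1)$ for $i=1,\ldots,n$, which generate $\N^{n+1}$ and correspond to the $n+1$ homogenizing variables. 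Hence a sparse Gr\"obner basis in this setting is literally a classical (homogeneous) Gr\"obner basis, and the dehomogenization $\chi_{\Delta_n\cap\Z^n}$ recovers the affine ring $k[X_1,\ldots,X_n]$, which is why the statement speaks of a ``classical Gr\"obner basis.''

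Next I would compute the two combinatorial invariants appearing in Theorem~\ref{theo:complF5} for $\polytope=\Delta_n$. The Ehrhart polynomial is $\HP_{\Delta_n}(d)=\#(d\cdot\Delta_n)=\binom{n+d}{n}$, since $d\cdot\Delta_n$ is the simplex of lattice points $(a_1,\ldots,a_n)\in\N^n$ with $\sum a_i\leq d$, whose count is the standard stars-and-bars number $\binom{n+d}{n}$; this yields the $\binom{n+\dwit}{n}^\omega$ factor directly. For the regularity, I would invoke Proposition~\ref{prop:regpolytope}: $\reg(k[\Delta_n])=n-\ell+1$ where $\ell$ is the smallest integer such that $\ell\cdot\Delta_n$ contains an interior lattice point. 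The smallest dilate of the standard simplex containing an interior integer point is $\ell=n+1$ (the point $(1,1,\ldots,1)$ lies in the interior of $(n+1)\Delta_n$ but no smaller dilate has an interior lattice point), so $\reg(k[\Delta_n])=n-(n+1)+1=0$. Substituting $\reg(k[\Delta_n])=0$ into the bound of Lemma~\ref{lem:sparsedreg} gives $\dwit\leq 0+1+\sum_{i=1}^n(d_i-1)=1+\sum_{i=1}^n(d_i-1)$, matching the claimed bound.

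Finally I would assemble these facts: Theorem~\ref{theo:complF5} gives the complexity $O\bigl(n\,\HP_\polytope(\dwit)^\omega\bigr)=O\bigl(n\binom{n+\dwit}{n}^\omega\bigr)$ with $\dwit\leq 1+\sum_{i=1}^n(d_i-1)$, which is exactly the corollary's estimate. I expect no serious obstacle here, as the argument is a straightforward specialization; the only points requiring a little care are (i) confirming that the identification of $k[\Delta_n]$ with the polynomial ring sends sparse Gr\"obner bases to ordinary Gr\"obner bases (so the ``sugar strategy'' remark is justified, $\Delta_n$ being normal so that $k[\Delta_n]$ is Cohen-Macaulay), and (ii) the clean evaluation $\reg(k[\Delta_n])=0$, which is the crux that makes the witness-degree bound collapse to the familiar Macaulay bound $1+\sum(d_i-1)$. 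Both are elementary verifications rather than genuine difficulties.
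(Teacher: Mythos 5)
Your proposal is correct and follows essentially the same route as the paper, which states this corollary as a direct specialization of Theorem~\ref{theo:complF5} (via Lemma~\ref{lem:sparsedreg} and Proposition~\ref{prop:regpolytope}) to $\polytope=\Delta_n$ without writing out the details. Your verifications --- $k[\Delta_n]\cong k[X_0,\ldots,X_n]$ so sparse GBs are classical GBs, $\HP_{\Delta_n}(d)=\binom{n+d}{n}$, and $\reg(k[\Delta_n])=0$ since $\ell=n+1$ is the smallest dilate of $\Delta_n$ with an interior lattice point --- are exactly the computations the paper leaves implicit, and they are all accurate.
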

In particular, notice that if $f_1,\ldots, f_n$ are inhomogeneous
polynomials in $k[X_1,\ldots, X_n]$, then their homogeneous counterparts in
$k[\Delta_n]$ form a regular sequence if and only if their ``classical
homogenization'' form a regular sequence.

\smallskip

{\bf Multi-homogeneous systems.} Another class of polynomials appearing frequently in applications are \emph{multi-homo\-ge\-neous systems}. A polynomial of multi-degree $(d_1,\ldots, d_\ell)$ w.r.t. a partition of the variables in blocks of sizes $(n_1,\ldots, n_\ell)$ is a polynomial whose Newton polytope is included in $d_1\Delta_{n_1}\times\dots\times d_\ell\Delta_{n_\ell}$. In that case, the associated polytope is a product of simplices, which allows us to state the following complexity theorem:

\begin{theorem}\label{theo:multihom}
Let $f_1,\ldots, f_n$ be a regular sequence of polynomials of multi-degree $(d_1,\ldots, d_\ell)$ w.r.t. a partition of the variables in blocks of sizes $(n_1,\ldots, n_\ell)$ (with $n_1+\dots+n_\ell=n$). Then the combined complexity of Steps (1) to (4) of the solving process in Section~\ref{sec:sGB} is bounded by
$$\begin{array}{rl}
&O\left(n \HP_\polytope(\dwit)^\omega + n\vol(\polytope)^3\right),\\
\text{where}&\polytope=d_1\Delta_{n_1}\times\dots\times d_\ell\Delta_{n_\ell},\\
&\dwit\leq n+2-\max_{i\in\{1,\ldots,\ell\}}(\lceil (n_i+1)/d_i\rceil),\\
&\HP_\polytope(\dwit)=\binom{n_1+\dwit\cdot d_1}{n_1}\cdots\binom{n_\ell+\dwit\cdot d_\ell}{n_\ell},\\
\text{and}&\vol(\polytope)=\binom{n}{n_1,\ldots,n_\ell}\prod_{i=1}^\ell d_i^{n_i}.
\end{array}$$
\end{theorem}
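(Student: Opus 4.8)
The plan is to obtain Theorem~\ref{theo:multihom} as a specialization of the general estimates in Theorems~\ref{theo:complF5} and~\ref{theo:complFGLM} to the product polytope $\polytope=d_1\Delta_{n_1}\times\dots\times d_\ell\Delta_{n_\ell}$. First I would verify the standing hypotheses: $\polytope$ is a full-dimensional lattice polytope with $\mathbf 0$ as a vertex, and it is normal. Normality follows from the facts that each dilated simplex $d_i\Delta_{n_i}$ is normal (its polytopal algebra is a Veronese subring of a polynomial ring, hence integrally closed) and that a product of normal lattice polytopes is normal. I would also record the key observation that, since the Newton polytope of each $f_i$ equals $\polytope$ itself, every $f_i$ is a \emph{degree-one} element of $k[\polytope]$; this is what makes the degree contribution $\sum_{j}(d_j-1)$ in Lemma~\ref{lem:sparsedreg} vanish.

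Next I would compute the two combinatorial invariants entering the bounds, using that lattice points and interiors behave multiplicatively on products. For the normalized volume, $\vol(\polytope_1\times\polytope_2)=\binom{i+j}{i}\vol(\polytope_1)\vol(\polytope_2)$ for $\polytope_1\subset\R^i,\polytope_2\subset\R^j$, while $\vol(d_i\Delta_{n_i})=d_i^{n_i}$ (since $\Delta_{n_i}$ has normalized volume $1$); iterating gives $\vol(\polytope)=\binom{n}{n_1,\dots,n_\ell}\prod_i d_i^{n_i}$. For the Ehrhart polynomial, $\HP_{\polytope_1\times\polytope_2}=\HP_{\polytope_1}\cdot\HP_{\polytope_2}$ because lattice points of a product are products of lattice points, and $\HP_{d_i\Delta_{n_i}}(t)=\#\big((t d_i)\Delta_{n_i}\cap\Z^{n_i}\big)=\binom{n_i+t d_i}{n_i}$, which yields the stated product formula for $\HP_\polytope(\dwit)$.

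For the witness degree I would invoke Lemma~\ref{lem:sparsedreg}, which (using $\deg_\polytope f_i=1$) gives $\dwit\le\reg(k[\polytope])+1$, and then compute the regularity via Proposition~\ref{prop:regpolytope}: I need the smallest dilation factor $c$ such that $c\cdot\polytope$ contains an interior lattice point. Since the interior of a product is the product of the interiors, $c\cdot\polytope$ has an interior lattice point iff each $c d_i\Delta_{n_i}$ does; and $m\Delta_{n_i}$ contains an interior lattice point iff $m\ge n_i+1$ (an interior integer point must satisfy $x_k\ge 1$ and $\sum_k x_k\le m-1$). Hence $c=\max_i\lceil(n_i+1)/d_i\rceil$, so $\reg(k[\polytope])=n+1-\max_i\lceil(n_i+1)/d_i\rceil$ and $\dwit\le n+2-\max_i\lceil(n_i+1)/d_i\rceil$. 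This interior-lattice-point computation for the product is the step I expect to require the most care, since it combines the multiplicativity of interiors with the precise threshold for dilated simplices, and it is what produces the headline witness-degree formula.

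Finally, for the {\tt sparse-FGLM} term I would apply Theorem~\ref{theo:complFGLM} with $r=\#\hilbert(\sgp_{\polytope\cap\Z^n})$ and $\delta=\dim_k(k[\sgp_{\polytope\cap\Z^n}]/I)$. The crucial simplification is that $\sgp_{\polytope\cap\Z^n}=\N^n$: indeed each factor $d_i\Delta_{n_i}$ contains $\mathbf 0$ and the standard basis vectors of $\Z^{n_i}$, so the semigroup generated by each factor is $\N^{n_i}$ and the whole semigroup is $\N^n$. Consequently $\hilbert(\sgp_{\polytope\cap\Z^n})=\{\mathbf e_1,\dots,\mathbf e_n\}$, giving $r=n$ (an improvement over the generic bound $\HP_\polytope(1)$), and the hypotheses of Theorem~\ref{theo:complFGLM} (simplicial semigroup, Cohen--Macaulay algebra) hold trivially since $k[\N^n]$ is a polynomial ring. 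For $\delta$, I would note that it equals the eventual value of the Hilbert function of the homogeneous quotient of $k[\polytope]$ by the homogenized ideal; Proposition~\ref{prop:HSsuitereg} together with $\deg_\polytope f_i=1$ and Proposition~\ref{prop:HSpolytope} gives this Hilbert series as $Q(t)/(1-t)$, so $\delta=Q(1)$, and comparing leading coefficients of $\HP_\polytope$ identifies $Q(1)=\vol(\polytope)$. Thus the {\tt sparse-FGLM} cost is $O(r\delta^3)=O(n\,\vol(\polytope)^3)$, and adding the Macaulay-matrix cost $O(n\,\HP_\polytope(\dwit)^\omega)$ from Theorem~\ref{theo:complF5} gives the claimed combined complexity.
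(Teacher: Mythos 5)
Your proposal is correct and follows essentially the same route as the paper: specialize Theorems~\ref{theo:complF5} and~\ref{theo:complFGLM} to $\polytope=d_1\Delta_{n_1}\times\dots\times d_\ell\Delta_{n_\ell}$, observe that the homogenized $f_i$ have degree one in $k[\polytope]$ (so Lemma~\ref{lem:sparsedreg} gives $\dwit\le\reg(k[\polytope])+1$), that $\sgp_{\polytope\cap\Z^n}=\N^n$ (so $r=n$ and the simplicial/Cohen--Macaulay hypotheses of Theorem~\ref{theo:complFGLM} hold trivially), and obtain the regularity from the interior-lattice-point threshold $\beta d_i>n_i$ via Proposition~\ref{prop:regpolytope} --- exactly the paper's computation, including the volume formula up to bookkeeping (you multiply normalized volumes with multinomial factors; the paper multiplies Euclidean volumes and then rescales by $n!$). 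The one step where you genuinely diverge is the bound $\delta\le\vol(\polytope)$: the paper invokes the classical multi-homogeneous B\'ezout bound on the number of solutions counted with multiplicity, whereas you read $\delta$ off the Hilbert series $\HS_{k[\polytope]/I}(t)=Q(t)/(1-t)$ as the eventual value $Q(1)=\vol(\polytope)$. Your route stays internal to the regular-sequence machinery already developed (Propositions~\ref{prop:HSsuitereg} and~\ref{prop:HSpolytope}) and needs no external solution-counting theorem, which is a nice economy; however, your claim that $\delta$ \emph{equals} the eventual Hilbert function value is too strong in general, since $\dim_k\bigl(k[\sgp_{\polytope\cap\Z^n}]/\chi(I)\bigr)$ can be strictly smaller than that value when the homogenizing monomial $X^{(\mathbf 0,1)}$ is a zerodivisor on the quotient in large degrees (``solutions at infinity''); only the inequality $\delta\le Q(1)$ holds unconditionally. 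Since the complexity statement needs only this upper bound, the imprecision is harmless, and your additional checks (normality of the product polytope, the Ehrhart product formula) correctly fill in details the paper leaves implicit.
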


\begin{proof}
Applying Theorems \ref{theo:complF5} and
\ref{theo:complFGLM} with $\polytope$ equal to $d_1\Delta_{n_1}\times\dots\times
d_\ell\Delta_{n_\ell}$ yields the complexity bound in terms of $\dwit$,
$\#\hilbert(S_{\polytope\cap\Z^n})$ and $\delta$.
First, notice that the semigroup generated by $\polytope\cap\Z^n$ is $\N^n$, and hence $\#\hilbert(S_{\polytope\cap\Z^n})=n$.
Next,  $\beta(d_1\Delta_{n_1}\times\dots\times d_\ell\Delta_{n_\ell})$
has an interior lattice point if and only if for all $i$, $\beta
d_i\Delta_{n_i}$ has an interior
lattice point, \emph{i.e.} $\beta d_i > n_i$. The smallest $\beta$ that verifies this condition is
$\max(\lceil (n_1+1)/d_1\rceil,\ldots,\lceil
(n_\ell+1)/d_\ell\rceil)$. By Prop.~\ref{prop:regpolytope}, $\reg(k[\polytope])=n+1-\max(\lceil
(n_1+1)/d_1\rceil,\ldots,\lceil (n_\ell+1)/d_\ell\rceil)$. Since the
$f_1,\ldots, f_n$ have degree $1$ in $k[\polytope]$, we get
$\dwit\leq \reg(k[\polytope])+1$.
Finally, notice that the unnormalized volume of $d
\Delta_q\in\R^q$ is $d^q/q!$. Consequently, the
unnormalized volume of $\polytope$ is $\prod_{i=1}^\ell
d_i^{n_i}/{n_i}!$. Normalizing the volume amounts to multiplying this
value by $n!$, which yields the formula for $\vol(\polytope)$ and equals the multi-homogeneous B\'ezout number. The number of solutions (counted with multiplicity) is classically bounded by this value and hence $\delta\leq \vol(\polytope)$.
\end{proof}

Finally, we state a variant of Fr\"oberg's conjecture
\cite{froberg1985inequality} in the sparse framework, leading to a
notion of ``sparse semi-regularity''. It provides a bound on the witness
degree of generic overdetermined sparse systems: this conjecture can be used to adjust the parameter $D$ of Algorithm~\ref{algo:matrixF5}.

\begin{conjecture}\label{conj:froberg}
Let $\polytope\subset\R^n$ be a normal lattice polytope, $(d_1,\ldots, d_m)\in\N^m$ be a sequence of integers with $m>n$. If $f_1,\ldots, f_m\in \C[\polytope]$ are generic homogeneous polynomials of respective degrees $(d_1,\ldots, d_m)$, then 
$$\HS_{\C[\polytope]/\langle f_1,\ldots, f_m\rangle}(t)=\left[\HS_{\polytope}(t)\prod_{i=1}^m (1-t^{d_i})\right]_+,$$
where $[~]_+$ means truncating the series expansion at its first nonpositive coefficient. 
Systems for which this equality holds are called \emph{semi-regular}. The witness degree of a semi-regular sequence is bounded above by the index of the first zero coefficient in the series expansion of $\HS_{\C[\polytope]/\langle f_1,\ldots, f_m\rangle}(t)$.
\end{conjecture}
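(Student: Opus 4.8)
Let me set $I=\langle f_1,\dots,f_m\rangle\subset\C[\polytope]$ and write $d_0$ for the index of the first vanishing coefficient of $\HS_{\C[\polytope]/I}(t)$. The genuinely conjectural content here — that \emph{generic} overdetermined sparse systems realize the truncated series $[\HS_\polytope(t)\prod_i(1-t^{d_i})]_+$ — is a statement about generic behaviour, and like the classical Fr\"oberg conjecture I do not expect it to be provable by elementary means. The plan therefore targets only the provable assertion attached to the conjecture: \emph{granting} semi-regularity, one has $\dwit\le d_0$. The argument will closely mirror the proof of Lemma~\ref{lem:sparsedreg}, the only structural change being that for $m>n$ the quotient $\C[\polytope]/I$ is Artinian, so its Hilbert function is eventually zero rather than eventually a nonzero constant.

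First I would invoke semi-regularity to write $\HS_{\C[\polytope]/I}(t)=[\HS_\polytope(t)\prod_{i=1}^m(1-t^{d_i})]_+$; by definition of the truncation every coefficient of index $\ge d_0$ vanishes. Since, in each degree, the monomials lying outside $\LM(I)$ form a $k$-basis of the corresponding graded piece of $\C[\polytope]/I$, the Hilbert functions of $\C[\polytope]/I$ and $\C[\polytope]/\LM(I)$ coincide; hence $\LM(I)_d=\C[\polytope]_d$ for every $d\ge d_0$.

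The heart of the argument is to propagate this equality upward and extract a degree bound on generators, using that $\C[\polytope]$ is generated in degree $1$. Because every degree-$(d+1)$ monomial of $\C[\polytope]$ factors as a product of a degree-$1$ monomial and a degree-$d$ monomial, the equality $\LM(I)_{d_0}=\C[\polytope]_{d_0}$ forces $\LM(I)_d=\C[\polytope]_d$ for all $d\ge d_0$ by induction, and shows that every monomial of $\LM(I)$ of degree $>d_0$ is divisible by one of degree $d_0$ already in $\LM(I)$. Thus $\LM(I)$ has a monomial generating set concentrated in degrees $\le d_0$. Dehomogenizing via $\chi_M$ — and using that $X^{(s,d)}\mid X^{(s',d')}$ in $\C[\polytope]$ implies $X^{s}\mid X^{s'}$ in $k[\sgp_M]$ — the $\chi_M$-images of these generators generate $\LM(\chi_M(I))$; since they are all produced by degree $\le d_0$, the chain $\chi_M(G_0)\subset\chi_M(G_1)\subset\cdots$ of DefProp.~\ref{defprop:dwit} guarantees they appear in $\chi_M(G_{d_0})$, and Proposition~\ref{prop:deshom} then yields that $\chi_M(G_{d_0})$ is a sGB of $\chi_M(I)$, i.e. $\dwit\le d_0$.

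The one place I expect to need care beyond routine bookkeeping is the transfer from the homogeneous leading ideal $\LM(I)\subset\C[\polytope]$ to the sparse one $\LM(\chi_M(I))\subset k[\sgp_M]$: a single monomial $X^{s}$ may have homogeneous preimages in several degrees, so the bound ``generators in degree $\le d_0$'' must be shown to genuinely control the sparse degree after applying $\chi_M$. This is exactly what divisibility-preservation under $\chi_M$ together with the inclusion chain of DefProp.~\ref{defprop:dwit} deliver, so the difficulty is conceptual (keeping the two gradings aligned) rather than computational.
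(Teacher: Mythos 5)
First, a point that matters for this comparison: the statement you were asked to prove is presented in the paper as a \emph{conjecture}, and the paper gives no proof of any part of it. So there is no paper proof to measure your attempt against; what can be judged is whether you correctly identified which part is genuinely conjectural and whether your argument for the remaining part is sound. On both counts your proposal holds up.

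Your scoping is right: the assertion that \emph{generic} systems satisfy the truncated-series formula is the sparse analogue of Fr\"oberg's conjecture and is open (for $\polytope=\Delta_n$ it essentially reduces to the classical conjecture), so no elementary proof should be expected, and the paper indeed claims none. The conditional claim --- semi-regularity implies $\dwit\le d_0$, with $d_0$ the index of the first zero coefficient --- is provable, and your proof of it is correct: vanishing of the Hilbert function of $\C[\polytope]/I$ in all degrees $\ge d_0$ transfers to $\C[\polytope]/\LM(I)$ (Macaulay's basis theorem, valid in semigroup algebras with admissible orderings and used implicitly by the paper in Lemma~\ref{lem:sparsedreg}); hence $\LM(I)_d=\C[\polytope]_d$ for $d\ge d_0$, and since $\C[\polytope]$ is generated in degree $1$, every monomial of degree $>d_0$ is divisible by a degree-$d_0$ monomial, all of which already lie in $\LM(I)$. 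Thus $\LM(I)$ is generated in degrees $\le d_0$; because the row span of the degree-$d$ Macaulay matrix is exactly $I_d$, these generators occur among the leading monomials of $G_0\cup\dots\cup G_{d_0}$, making that union a full sGB of $I$, and Proposition~\ref{prop:deshom} together with the inclusion chain $\chi_M(G_0)\subset\dots\subset\chi_M(G_{d_0})$ of DefProp~\ref{defprop:dwit} yields $\dwit\le d_0$. This is precisely the technique of the paper's own Lemma~\ref{lem:sparsedreg}, transposed from the regular case (Hilbert function of the quotient eventually a nonzero constant, bound $\reg(k[\polytope])+1+\sum_j(d_j-1)$) to the Artinian semi-regular case (Hilbert function eventually zero, bound $d_0$); your version is in fact slightly cleaner, since the Artinian situation avoids the lemma's bookkeeping about lattice points of $d\cdot\polytope$ that are new relative to $(d-1)\cdot\polytope$.
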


\section{Experimental results}\label{sec:expe}
In this section, we estimate the speed-up that one can expect for solving sparse
systems or systems of Laurent polynomials via sparse Gr\"obner bases computations, compared to classical Gr\"obner bases algorithms. 
The same linear algebra routines are used in the compared implementations. Consequently, the speed-up reflects the differences between the characteristics (size, sparseness,\ldots) of the matrices that have to be reduced.

{\bf Workstation.} All experiments have been conducted on a  2.6GHz {\tt IntelCore i7}.

We compare 
{\tt sparse-MatrixF5} (abbreviated {\tt sp-MatrixF5}) with the
implementation of the $F_5$ algorithm in the FGb library. We report
more detailed experimental results on a benchmarks'
webpage\footnote{\url{http://www-polsys.lip6.fr/~jcf/Software/benchssparse.html}}. In all these experiments, the base
field $k$ is the finite field $\GF(65521)$. All tests are done with
overdetermined systems with one rational solution in
$\GF(65521)^n$. The goal is to recover this solution. In that case,
the {\tt FGLM} algorithm is not necessary since the sparse Gr\"obner
basis describes explicitly the image of the solution by a monomial
map. In several settings, we report the speed-up obtained with our
prototype implementation.

{\bf Bilinear systems.} In Table \ref{table:bilinear}, we focus on overdetermined bilinear
systems. For $(n_x,n_y,m)\in\N^3$, we generate a system of $m$ polynomials
with support $\Delta_{n_x}\times\Delta_{n_y}$ uniformly at random in
the set of such systems which have at least one solution in
$\GF(65521)^{n_x+n_y}$.

{\bf Systems of bidegree $(2,1)$.} In Table \ref{table:bidegree}, we
report the performances on overdetermined systems with support $2\Delta_{n_x}\times\Delta_{n_y}$. Note that we obtain important speed-ups when
$n_x<n_y$ (more than 19000 for $(n_x,n_y,m)=(3,10,24)$).

{\bf Fewnomial systems.} In Table \ref{table:fewnomials}, we report
performances on fewnomial systems. The complexity analysis in
Section \ref{sec:complexity} do not apply to this context because the
semigroup algebra in which we compute is not normal. However, the correctness of the algorithms still holds. The
systems are generated as follows: for $(n,t,m)\in\N^3$ we pick $t$
monomials of degree $2$ in $n$ variables uniformly at random and we
generate a system of $m$ polynomials with this support in
$\GF(65521)[X_1,\ldots, X_n]$ with random coefficients such that there
is at least one solution in $\GF(65521)^n$. The computations are done
w.r.t. the semigroup generated by the $t$ monomials. Note that for
some specific instances, the speed-up factor can be as high as 16800.

\begin{table}
  \centering
  \begin{tabular}{|c||c|c||c|}
\hline
$(n_x,n_y,m)$ & {\tt sp-MatrixF5} & {\tt FGb-F5} & Speed-up\\
\hline
\hline
(2,29,40) & 0.12s & 5.2s & 43 \\

(2,39,53) & 0.49s & 36.7s & 74 \\

(2,49,65) & 1.53s & 298.5s & 195\\

(2,59,78) & 4.63s & 852.3s & 184 \\

(6,19,52) & 1.10s & 25.2s & 22 \\

(6,21,56) & 2.13s & 51.5s & 24 \\

(6,27,71) & 7.07s & 236.0s & 33 \\
\hline
  \end{tabular}
  \caption{Overdetermined bilinear systems in $(n_x,n_y)$ variables and $m$ equations}\label{table:bilinear}
\end{table}

\begin{table}
  \centering
  \begin{tabular}{|c||c|c|c|}
\hline
$(n_x,n_y,m)$ & {\tt sp-MatrixF5} & {\tt FGb-F5} &Speed-up \\
\hline
\hline
(1,34,36) & 0.2s & 395.1s & 1975 \\
(1,39,41) & 0.45s & 1641s & 3646 \\
(1,44,46) & 0.75s & 3168.8s & 4225 \\
(2,15,25) & 0.09s & 410.1s & 4556 \\
(2,17,27) & 0.15s & 1894.7s & 12631 \\
(2,19,30) & 0.4s & 5866.1s & 14665 \\
(3,10,24) & 0.15s & 2937.7s & 19584 \\
\hline
(10,4,50) & 23.1s & 1687.3s & 73 \\
(11,5,66) & 155.1s & 6265.8s & 40 \\
(12,6,86) & 872.2s & 27093.3s & 31 \\
\hline
 \end{tabular}
  \caption{Systems in $(n_x,n_y)$ variables of bidegree $(2,1)$ and $m$ equations}\label{table:bidegree}
\end{table}

\begin{table}
  \centering
  \begin{tabular}{|c||c|c||c|}
\hline
$(n,t,m)$ & {\tt sp-MatrixF5} & {\tt FGb-F5} &Speed-up \\
\hline
\hline
(80,240,221)&0.10s & 54.5s&545\\
(80,      240,     223)&           0.08s&               16.3s&203\\        
(150,     450,     434)&           0.24s&              161.2s&671\\
(300,     900,     881)&          4.56s&              11301.0s&2478\\
(120,     240,     233)&           0.01s&               16.8s&16800\\
(40,      160,     128)&            0.21s&               5.93s&28\\
(60,      240,     211)&            0.55s&              29.04s&52\\
\hline
  \end{tabular}
  \caption{Fewnomials systems}\label{table:fewnomials}
\end{table}

\end{document}